\documentclass[conference,compsoc]{IEEEtran} 

\usepackage[T1]{fontenc}
\usepackage[utf8]{inputenc}
\usepackage[english]{babel}
\usepackage{amsmath}
\usepackage{amsfonts}
\usepackage{amssymb}
\usepackage{mathtools}
\usepackage{braket}
\usepackage{amsthm}
\usepackage{pgfgantt}
\theoremstyle{definition}
\newtheorem{theorem}{Theorem}
\newtheorem{definition}{Definition}
\newtheorem{lemma}{Lemma}
\newtheorem{remark}{Remark}
\usepackage{enumitem}
\usepackage{caption}
\usepackage{listings}
\usepackage[pdftex=true,hyperindex=true,colorlinks=false,hidelinks=false]{hyperref}
\usepackage{tabularx}
\usepackage[lined,boxed,algonl]{algorithm2e}
\usepackage{etex}
\usepackage{tikz}
\usepackage{pgfplots}
\usepackage{placeins}

\usepackage{cite}
\newenvironment{aeq}{\begin{equation}
	\begin{aligned}
}{
	\end{aligned}
\end{equation}}
\numberwithin{equation}{section}
\numberwithin{theorem}{section}
\numberwithin{lemma}{section}
\numberwithin{definition}{section}

\definecolor{color1}{RGB}{0,0,90} 
\definecolor{color2}{RGB}{0,20,20} 



\usepackage{hyperref} 
\hypersetup{hidelinks,colorlinks,breaklinks=true,urlcolor=color2,citecolor=color1,linkcolor=color1,bookmarksopen=false,pdftitle={Title},pdfauthor={Author}}

\newcommand{\mf}[1]{\mathfrak{#1}}
\newcommand{\mbf}[1]{\mathbf{#1}}
\newcommand{\mc}[1]{\mathnormal{#1}}
\newcommand{\prop}{\text{prop}}
\newcommand{\auth}{\text{auth}}

\usepackage{cleveref}

\crefname{theorem}{Theorem}{Theorems}
\Crefname{theorem}{Theorem}{Theorems}
\crefname{thm}{Theorem}{Theorems}
\Crefname{thm}{Theorem}{Theorems}
\crefname{assump}{Assumption}{Assumptions}
\Crefname{assump}{Assumption}{Assumptions}
\crefname{problem}{Problem}{Problems}
\Crefname{problem}{Problem}{Problems}
\crefname{conjecture}{Conjecture}{Conjectures}
\Crefname{conjecture}{Conjecture}{Conjectures}
\crefname{proposition}{Proposition}{Propositions}
\Crefname{proposition}{Proposition}{Propositions}
\crefname{prop}{Proposition}{Propositions}
\Crefname{prop}{Proposition}{Propositions}
\crefname{cor}{Corollary}{Corollaries}
\Crefname{cor}{Corollary}{Corollaries}
\crefname{lem}{Lemma}{Lemmas}
\Crefname{lem}{Lemma}{Lemmas}
\theoremstyle{definition}
\crefname{definition}{Definition}{Definitions}
\Crefname{definition}{Definition}{Definitions}
\crefname{defn}{definition}{definitions}
\Crefname{defn}{Definition}{Definitions}
\crefname{remark}{Remark}{Remarks}
\Crefname{remark}{Remark}{Remarks}
\crefname{rmk}{Remark}{Remarks}
\Crefname{rmk}{Remark}{Remarks}
\crefname{example}{Example}{Examples}
\Crefname{example}{Example}{Examples}
\crefname{table}{Table}{Tables}
\Crefname{table}{Table}{Tables}
\crefname{align}{}{}
\Crefname{align}{}{}
\crefname{equation}{eq.}{eqs.}
\Crefname{equation}{Eq.}{Eqs.}
\crefname{step}{Step}{Steps}
\Crefname{step}{Step}{Steps}
\crefname{protocol}{protocol}{protocols}
\Crefname{protocol}{Protocol}{Protocols}
\crefname{algorithm}{algorithm}{algorithms}
\Crefname{algorithm}{Algorithm}{Algorithms}


\title{Composable Security of Distributed Symmetric Key Establishment Protocol} 
\author{
    \IEEEauthorblockN{
        Jie Lin\IEEEauthorrefmark{1}\IEEEauthorrefmark{2},
        Manfred von Willich\IEEEauthorrefmark{1},
        Hoi-Kwong Lo\IEEEauthorrefmark{1}\IEEEauthorrefmark{2}
    }
    \IEEEauthorblockA{
        \IEEEauthorrefmark{1}Quantum Bridge Technologies Inc., 108 College St., Toronto, ON, Canada
    }
    \IEEEauthorblockA{
        \IEEEauthorrefmark{2}Department of Electrical and Computer Engineering, University of Toronto, 10 King’s College Road, Toronto, ON, Canada 
    }
}


\begin{document}





\maketitle 

\begin{abstract}
The Distributed Symmetric Key Establishment (DSKE) protocol provides secure secret exchange (e.g., for key exchange) between two honest parties that need not have had prior contact, and use intermediaries with whom they each securely share confidential data. We show the composable security of the DSKE protocol in the constructive cryptography framework of Maurer. Specifically, we prove the security (correctness and confidentiality) and robustness of this protocol against any computationally unbounded adversary, who additionally may have fully compromised a bounded number of the intermediaries and can eavesdrop on all communication. As DSKE is highly scalable in a network setting with no distance limit, it is expected to be a cost-effective quantum-safe cryptographic solution to safeguarding the network security against the threat of quantum computers. 
\end{abstract}

\tableofcontents 


\pagestyle{plain}
\section{Introduction} \label{sec:intro}

Public key infrastructure has played a key role in today's network security. As tremendous experimental progress has been made in quantum computing in the last three decades, the quantum threat to communication security is widely recognized by governments, industries and academia. To counter the quantum threat to public key infrastructure, there are three major categories of solutions: post-quantum cryptography (PQC) \cite{Bernstein2017}, quantum key distribution (QKD) \cite{Xu2020} and pre-shared keys (PSKs). An advantage of PQC is that it is software-based and can be implemented in the Internet without dedicated special hardware. However, since PQC is based on unproven computational intractability assumptions, the risk of an unexpected security breach of PQC is high \cite{Geer2023nist, Townsend2023}. While QKD provides information-theoretic security, the cost of QKD can be quite high and there are often limits to its key rate and distance. Without quantum repeaters, QKD is not yet a scalable solution in the global Internet. PSK has the advantage of being quantum-safe because it either employs one-time-pad or symmetric key crypto-systems that, unlike public key crypto-systems, are resistant to quantum attacks. Unfortunately, up till now, PSK has the disadvantage of being unscalable in a network of many users. This is because each user has to share a key with another user to communicate securely. Therefore, with a large number, say $N$, users, there are $N(N-1)/2$ pairs of users. Each time when a new user joins the system, the existing users need to share a new key with the new user in order to communicate with them. This is highly inconvenient and costly.

Recently, the distributed symmetric key establishment (DSKE) protocol \cite{Lo2022, DSKEpatent} has been introduced to provide a scalable solution that is also information-theoretically secure. DSKE has three major advantages.
First, DSKE is highly scalable. With DSKE, when a new user comes on-line, no new key materials need to be delivered to the existing users, unlike PSKs.
Second, DSKE distributes trust among multiple third parties, which are called Security Hubs in DSKE. Provided that the number of compromised Security Hubs (those that deviate from honest behaviours defined in the protocol) is below a predetermined threshold, DSKE provides {\it information-theoretic} security.
This implies that DSKE avoids any single points of failure.
Third, unlike QKD, DSKE has no distance limit and does not require dedicated optical fibres.  

DSKE is also highly versatile and can be used in many applications including mobile phones, network security and embedded systems such as Internet of Things devices. More concretely, the DSKE protocol may be used to agree sequences of data between multiple parties with a number of security properties, including quality of randomness, robustness and information-theoretically secure authenticity and confidentiality, to support a large range of use cases.
As an example, DSKE allows two participants, Alice and Bob, to agree on an information-theoretically secure shared symmetric key, which can then be used for encrypting some data for Alice to send to Bob. Data integrity can also be achieved by the DSKE protocol through authentication by message tag with information-theoretically secure authentication.

In this paper, we provide a rigorous security proof for the DSKE protocol. To allow other cryptographic applications to use the sequence of data agreed by honest parties from the DSKE protocol (which we call the \textit{secret} for the remainder of this paper) in a secure way, it is important to prove the composable security of the DSKE protocol. To do so, we show the DSKE protocol is $\epsilon$-secure in the constructive cryptography framework by Maurer \cite{Maurer2011}, which implies universal composability \cite{canetti2001universally,Canetti2020}. We first prove the security of the DSKE protocol by assuming the availability of perfectly authenticated channels. Then we use the composability theorem in the constructive cryptography framework to replace perfectly authenticated channels by a practical authentication protocol and insecure communication channels. 
We also show the DSKE protocol is $\epsilon$-robust, which means with a probability at least $1-\epsilon$, the protocol does not abort if an adversary is passive on communication channels (see \Cref{sec:threat_model} for a precise definition of being passive).

Another novelty of this work is that our scheme to verify the correctness of the secret does not rely on pre-shared keys between the communicating parties. A typical message authentication scheme requires a pre-shared key to securely verify the message's authenticity. In our scheme, we transmit the key together with the message, relying only on the same security assumptions that are needed for confidentiality. We show the correctness and confidentiality of such a secret validation scheme.

\subsection{Structure of this document}
In \Cref{sec:preliminary}, we present technical preliminaries that are relevant for our security proof. In particular, we state the results about our choice of 2-universal hash function family. We also review the constructive cryptography framework \cite{Maurer2011}. We provide a short summary of the DSKE protocol in \Cref{subsec:skeleton_protocol_description} and direct readers to \Cref{app:protocol} for a detailed description of the general DSKE protocol with simplifying protocol parameter choices. A table of symbols used in the protocol description is given in \Cref{table:symbols}. Readers seeking greater familiarity with the DSKE protocol may read \cite{Lo2022}. 

In \Cref{sec:security}, we discuss the security definition for the DSKE protocol and prove the security of a variant of the protocol, which is the general DSKE protocol under the assumption that perfectly authenticated channels are freely available. 

In \Cref{sec:proof}, we then prove the security of the general DSKE protocol.

We show results about the security of hashing for message authentication in \Cref{app:hash_message} and about the security of Shamir secret sharing with validation in \Cref{app:hash_secret}. 

\FloatBarrier

\section{Technical Preliminaries} \label{sec:preliminary}

\subsection{Notation}\label{subsec:notation}

We define some common notations used throughout this paper to assist our discussions. In particular, $A$ refers to Alice, $B$ refers to Bob, $E$ refers to Eve, and $P_i$ refers to the unique identifier assigned to the Security Hub indexed by $i\in\{1,\dots,n\}$. We highlight that $F$ denotes the same finite field throughout, $|F|$ is the number of elements in $F$, and by \textit{element} we mean an element of $F$. By \textit{length}, we mean \textit{number of field elements}. 

As we often need to write a collection of symbols, (\dots, $Y_i$, \dots), where the index $i$ iterates through each element of a list $S$, we define a shorthand notation $(Y_i)_{i \in S}$ or simply $(Y_i)_{i}$ when the list $S$ is clear from the context (e.g., we write $(Y_1,Y_2,Y_3)$ as $(Y_i)_{i \in [1,2,3]}$ or just $(Y_i)_{i}$). We use capital letters (e.g., $Y$) to denote random variables and lowercase letters (e.g., $y$) to denote a particular value of the random variable. We use \textit{uniform} to mean \textit{uniformly distributed}. We use $\delta_{x,y}$ to denote the Kronecker delta function. Between two sequences, $\parallel$ denotes the concatenation operation. Between two resources, $\parallel$ denotes the parallel composition operation.

Throughout, given a finite sample space $\Omega$ and a probability distribution $p$ over it, we construct the corresponding probability space $(\Omega, \mathcal{F}, \mbf{P})$ by setting the event space $\mathcal{F}$ as the collection of all subsets of $\Omega$, and $\mbf{P}(A)=\sum_{\omega\in A} p(\omega)$.

\Cref{table:symbols} in \Cref{app:protocol} lists symbols related to the protocol description. Other notations are introduced where they first appear.

\subsection{Mathematical fundamentals}

\begin{enumerate}[leftmargin=0cm]

    \item[] \textbf{Fields:} Let $F=\mathrm{GF}(p^{r})$, a Galois field with $p$ a prime number and $r>0$ an integer. $F$ is a finite field with $|F|=p^{r}$ elements. Multiplication by a nonzero element of $F$ is a bijective mapping $F\to F$.

    \item[] \textbf{Addition of vectors:} The additive group of the vectors of dimension $m$ over $\mathrm{GF}(p^{r})$, namely $\mathrm{GF}(p^{r})^m$, is the same as the additive group of $\mathrm{GF}(p)^{rm}$.  This allows addition of vectors of suitable lengths over different fields to be compatible (i.e., to be the same group), when both fields have the same characteristic $p$.  For example, we can use this to match a Shamir secret sharing scheme of vectors over $\mathrm{GF}(2^{8})$, with a hash function over the field $\mathrm{GF}(2^{128})$ using bit-wise exclusive-or as addition while maintaining the properties that depend on this addition.  For simplicity of exposition, we use the same field for both the secret sharing scheme and the validating hash function.

    \item[] \textbf{Probability theory:}

    For any proposition $\prop(X, Y)$ of random variables $X$ and $Y$,%
    \begin{aeq}\label{eq:prob_prop}
        & \Pr(\prop(X, Y)) \\
        & = \sum_{y\in F} \Pr(\prop(X, Y) | Y = y) \Pr(Y = y).
    \end{aeq}

    When summing over all values of a random variable $Y \in \Omega$, where $\Omega$ is the relevant sample space for $Y$,%
    \begin{aeq}\label{eq:prob_sum_to_one}
        \sum_{y \in \Omega} \Pr(Y = y) = 1. 	
    \end{aeq}%
    When $X,Y\in \Omega$ are mutually independent and $X$ uniform,%
    \begin{aeq}\label{eq:prob_uniform}
        \Pr(X = x | Y = y) = 1/|\Omega|. 	
    \end{aeq}%
    Random variables $X$ and $Y$ are mutually independent if%
    \begin{aeq}\label{eq:prob_independent}
        \forall (x, y): & \Pr( (X, Y) = (x, y) ) \\
        & = \Pr( X = x ) \Pr( Y = y ). 	
    \end{aeq}

    \item[] \textbf{Statistical distance:} For two probability distributions $P_X$ and $Q_X$ of a random variable $X$ that can take any value in some set $\mathcal{X}$, the statistical or total variation distance between $P_X$ and $Q_X$ has the following properties:%
    \begin{aeq}\label{eq:statistical_distance_property}
        &\frac{1}{2}\sum_{x \in \mathcal{X}}|P_X(x)-Q_X(x)| \\
         = &\max_{\mathcal{X}':\mathcal{X}'\subseteq \mathcal{X}}\sum_{x \in \mathcal{X}'}(P_X(x)-Q_X(x)) \\
         = &\sum_{x: P_X(x) \geq Q_X(x)} (P_X(x)-Q_X(x)).
    \end{aeq}

\end{enumerate}

\subsection{Cryptographic primitives}

\subsubsection{Message validation (authentication)}

The following \cref{thm:hash_function_correctness} is a consequence of the hashing scheme being a Carter--Wegman universal hash function family \cite{wegman1981new}.  This polynomial function family is described by Bernstein \cite{Bernstein2007}.

\begin{theorem}\label{thm:hash_function_correctness}
    Denote $\mbf{v} = (v_{1},...,v_{s})$ and $\mbf{v}^{*} = (v^{*}_{1},...,v^{*}_{s})$. \\
    Let $\Omega=F^2$ be a sample space with uniform probability. \\
    Let $h_{C,D}(\mbf{v})=d+\sum_{j=1}^{s}c^{j} v_{j}$ define a family of functions with random variables $(C,D)\in\Omega$ as selection parameters.
    Let $s\ne0$. Let $t\in F$ be given.
    Then, $\max_{t^{*},\mbf{v}^{*}\ne\mbf{v}} \Pr(t^{*}=h_{C,D}(\mbf{v}^{*})~|~t=h_{C,D}(\mbf{v})) = \min(\frac{s}{|F|},1)$.
\end{theorem}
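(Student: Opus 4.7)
The plan is to reduce the conditional probability to counting roots of a polynomial of degree at most $s$ over $F$ in the variable $C$, after using the uniformity of $D$ to eliminate the conditioning.

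First I would expand using the definition of conditional probability and observe that, for any fixed $C$, the equation $t = D + \sum_{j=1}^{s} C^j v_j$ uniquely determines $D$. Since $(C,D)$ is uniform on $F^2$, the denominator $\Pr(t = h_{C,D}(\mbf{v}))$ equals $1/|F|$ by summing over $C$ using \Cref{eq:prob_prop} and \Cref{eq:prob_uniform}. For the numerator, both conditions $t = h_{C,D}(\mbf{v})$ and $t^{*} = h_{C,D}(\mbf{v}^{*})$ hold iff $D = t - \sum_j C^j v_j$ and simultaneously the difference equation
\begin{equation*}
    t^{*} - t = \sum_{j=1}^{s} C^{j}(v^{*}_{j} - v_{j})
\end{equation*}
holds. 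Setting $\mbf{w} = \mbf{v}^{*} - \mbf{v}$ and $u = t^{*} - t$, and writing $p(C) = \sum_{j=1}^{s} w_{j}\, C^{j}$, the numerator equals $|S|/|F|^{2}$ where $S = \{c \in F : p(c) = u\}$. Thus the conditional probability equals $|S|/|F|$.

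Next I would bound $|S|$. Because $\mbf{v}^{*} \ne \mbf{v}$, the vector $\mbf{w}$ is nonzero, so $p(C) - u \in F[C]$ is a nonzero polynomial of degree at most $s$, hence has at most $s$ roots in $F$. Combined with the trivial bound $|S| \le |F|$, this yields $|S| \le \min(s,|F|)$ and therefore $|S|/|F| \le \min(s/|F|, 1)$. To show achievability of the bound, I would split into two cases. If $s \le |F|$, pick $s$ distinct elements $c_{1},\dots,c_{s} \in F$ and let $\prod_{i=1}^{s}(C - c_{i}) = C^{s} + a_{s-1}C^{s-1} + \cdots + a_{1}C + a_{0}$; then choosing $w_{j}$ to be the coefficient of $C^{j}$ for $j=1,\dots,s$ and $u = -a_{0}$ gives $p(C) - u$ with exactly $s$ roots, so $|S| = s$. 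If $s > |F|$, the polynomial $C^{|F|} - C$ vanishes identically on $F$ and can be written as $p(C)$ with $\mbf{w} \ne 0$; taking $u = 0$ (i.e. $t^{*} = t$) gives $|S| = |F|$, so the conditional probability equals $1$. Both cases together realize $\min(s/|F|,1)$, completing the proof.

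The only mildly subtle point is the distinction between a polynomial being zero as an element of $F[C]$ versus zero as a function on $F$; a nonzero polynomial of degree up to $s$ can represent the zero function when $s \ge |F|$, which is precisely why the bound saturates at $1$ in that regime. Everything else is a direct computation using the independence and uniformity of $(C,D)$ and the standard degree bound on roots.
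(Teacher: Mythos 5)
Your proof is correct and takes essentially the same route as the paper's own argument in \cref{thm:message_correctness}: condition on $t=h_{C,D}(\mbf{v})$ so that $C$ remains uniform, subtract to obtain the difference equation $t^{*}-t=\sum_{j=1}^{s}C^{j}(v^{*}_{j}-v_{j})$, and bound the probability by counting roots of this nonzero polynomial in $C$. The only notable difference is that you make the attainability of $\min(s,|F|)$ roots explicit via the product construction (and $C^{|F|}-C$ for $s>|F|$), where the paper simply asserts that the bound is attainable.
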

\begin{proof}
	See \Cref{thm:message_correctness} in \Cref{app:hash_message}.
\end{proof}

Given a message $\mbf{v}$ together with a valid tag $t=h_{c,d}(\mbf{v})$, with $(c,d)$ being uniform (described by the random variables $C,D$), \cref{thm:hash_function_correctness} tells us that the maximal success probability for a forger to create a differing message $\mbf{v}^{*}$ and a new hash value $t^{*}$ such that these correspond (i.e., that $t^{*}=h_{C,D}(\mbf{v}^{*})$) is at most $\min(\frac{s}{|F|},1)$ when the message length $s$ is nonzero.
In the other words, (for large $|F|$) not knowing $(c,d)$ makes it highly improbable for the forger to successfully substitute a tag and \textit{differing} message.

The theorem does not apply without a prior message. In the absence of a prior message, the probability that a forged message will be validated is $\frac{1}{|F|}$, which is independent of the length $s$.

\subsubsection{Secret confidentiality and validation}

Shamir introduced a secret sharing scheme that produces $n$ shares, any $k$ of which are sufficient to reconstruct the secret, but any $k-1$ of which give no information about the secret \cite{shamir1979share}.

\begin{theorem}\label{thm:secret_confidentiality}
    In a Shamir secret sharing scheme with threshold $k$, the shared secret is independent of any subset of the shares of size $k-1$ or less.
\end{theorem}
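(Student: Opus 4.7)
The plan is to reduce the claim to the bijectivity, under Lagrange interpolation, of the map that sends the random coefficients of the sharing polynomial to the tuple of any $k-1$ share values. First I would recall the Shamir construction: the dealer samples a polynomial $f(x)=a_0+a_1 x+\cdots+a_{k-1} x^{k-1}$ of degree less than $k$ over $F$, with $a_0$ fixed to equal the secret $s$ and the remaining coefficients $(A_1,\dots,A_{k-1})$ drawn uniformly and independently from $F$; the share associated with party $i$ is $f(x_i)$ for fixed distinct nonzero evaluation points $x_i\in F$.

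The key step is to fix any subset of $m\le k-1$ shares, indexed without loss of generality by $i\in\{1,\dots,m\}$ at evaluation points $x_1,\dots,x_m$. For any target values $(y_1,\dots,y_m)\in F^{m}$ and any secret value $s$, the $m+1\le k$ interpolation constraints $f(0)=s,\ f(x_1)=y_1,\dots,f(x_m)=y_m$ at distinct points form an affine-linear system in the coefficients $(a_1,\dots,a_{k-1})$ whose coefficient matrix is a submatrix of a Vandermonde matrix in distinct nodes, hence of full row rank. Consequently the set of polynomials of degree less than $k$ satisfying these constraints is an affine subspace of $F^{k-1}$ of codimension $m$, of size exactly $|F|^{k-1-m}$.

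Putting this together, conditional on $S=s$ the probability that the designated $m$ shares equal $(y_1,\dots,y_m)$ is $|F|^{k-1-m}/|F|^{k-1}=|F|^{-m}$, which does not depend on $s$. By \cref{eq:prob_prop}--\cref{eq:prob_independent}, this conditional-uniformity suffices to conclude that the joint distribution of any $m\le k-1$ shares is independent of the secret $S$. I do not anticipate a real obstacle here: the whole argument rests on the nonsingularity of Vandermonde systems in distinct nodes, and the only minor care needed is to handle $m<k-1$ by counting the number of admissible polynomials (rather than appealing to uniqueness, as one would do only when $m=k-1$).
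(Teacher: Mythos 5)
Your proof is correct, but it takes a genuinely different route from the paper's. You use the classical coefficient-counting argument from Shamir's original paper: fix the secret, observe that the constraints pinning down any $m\le k-1$ share values form a full-row-rank (Vandermonde-type) affine system in the free coefficients, count the $|F|^{k-1-m}$ admissible polynomials, and conclude that the conditional distribution of the shares given the secret is uniform, hence independent of it. The paper instead first proves \cref{thm:shamir_linearity} (the secret is a linear combination $\sum_{i\in J} y_i L_i(x_0)$ of any $k$ shares with all coefficients nonzero), takes as its model that $k$ of the shares are i.i.d.\ uniform, and then applies a one-time-pad-style masking lemma (\cref{lem:indepsum}): the term $d_i Y_i$ contributed by a share outside the adversary's subset is uniform and independent of the rest, so the secret is independent of that subset. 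The paper's route is economical because it reuses \cref{thm:shamir_linearity}, which is needed anyway for the malleability and validation analysis; your route is self-contained, works for an arbitrary (not necessarily uniform) distribution on the secret, and additionally shows the $k-1$ shares are themselves uniform. Two small points of care: the paper's instantiation parameterizes the dealer's randomness by the first $k$ shares (the PSRD) rather than by the polynomial coefficients, so you should note these parameterizations are related by a bijective Vandermonde map; and ``submatrix of a Vandermonde matrix, hence full row rank'' deserves the one-line justification that factoring the nonzero $x_i$ out of each row leaves an $m\times(k-1)$ Vandermonde matrix in distinct nodes with $m\le k-1$.
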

\begin{proof}
    See \cref{thm:shamir_confidentiality} in \Cref{app:hash_secret}.
\end{proof}

\Cref{thm:secret_confidentiality} tells us that, under the constraint of access to only $k-1$ shares, a Shamir threshold-$(n,k)$ secret sharing scheme has perfectly secure confidentiality. 

\begin{theorem}\label{thm:secret_linearity}
    In a Shamir secret sharing scheme with threshold $k$, for any given set of $k$ shares, the secret is a linear combination of the shares.
\end{theorem}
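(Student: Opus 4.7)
The plan is to invoke the standard Lagrange interpolation argument. Recall that in Shamir's $(n,k)$ threshold scheme, the dealer samples a polynomial $f(x)=a_0+a_1 x+\cdots+a_{k-1}x^{k-1}$ over $F$ with $a_0$ equal to the secret and $a_1,\dots,a_{k-1}$ uniform, and each share is a pair $(x_i,f(x_i))$ for distinct nonzero evaluation points $x_i\in F$. Given any $k$ shares $(x_{i_1},y_{i_1}),\dots,(x_{i_k},y_{i_k})$ with pairwise distinct $x_{i_j}$, the polynomial $f$ is uniquely determined by these values since its degree is at most $k-1$.

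First I would write down the Lagrange interpolation formula explicitly, namely
\begin{equation*}
    f(x)=\sum_{j=1}^{k} y_{i_j}\,\ell_j(x),\qquad \ell_j(x)=\prod_{\substack{l=1\\ l\ne j}}^{k}\frac{x-x_{i_l}}{x_{i_j}-x_{i_l}},
\end{equation*}
where the denominators are nonzero (and hence invertible in $F$) by distinctness of the evaluation points. Then I would evaluate at $x=0$ to recover the secret
\begin{equation*}
    a_0=f(0)=\sum_{j=1}^{k}\lambda_j\, y_{i_j},\qquad \lambda_j=\ell_j(0)=\prod_{\substack{l=1\\ l\ne j}}^{k}\frac{-x_{i_l}}{x_{i_j}-x_{i_l}}\in F.
\end{equation*}
Since the coefficients $\lambda_j$ depend only on the (fixed) evaluation points $x_{i_1},\dots,x_{i_k}$ and not on the share values or the secret, this exhibits the secret as an explicit $F$-linear combination of the $k$ given shares.

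There is essentially no obstacle here: the entire statement reduces to the well-known fact that polynomial evaluation at a fixed point is a linear functional on the space of polynomials of degree less than $k$, combined with unique interpolation. The only bookkeeping to be careful about is to verify that the denominators $x_{i_j}-x_{i_l}$ are invertible in $F$, which follows from the protocol's assumption that evaluation indices are distinct nonzero elements of $F$, and to note that the argument uses nothing about the distribution of the higher coefficients $a_1,\dots,a_{k-1}$, so the linearity holds pointwise for every realization of the sharing polynomial.
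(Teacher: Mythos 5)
Your proof is correct and follows essentially the same route as the paper's: unique degree-$(k-1)$ interpolation through the $k$ shares, followed by evaluation of the Lagrange form at the secret's evaluation point, with the denominators invertible because the $x$-coordinates are distinct. The only cosmetic difference is that you fix the secret at $x=0$ (the constant coefficient $a_0$), whereas the paper evaluates at a general predetermined point $x_0$ and additionally records that the coefficients $L_i(x_0)$ are nonzero for later use in its malleability discussion; neither difference affects the validity of your argument for the stated claim.
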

\begin{proof}
    See \cref{thm:shamir_linearity} in \Cref{app:hash_secret}.
\end{proof}

\begin{theorem}\label{thm:hash_function_confidentiality}
    Denote $\mbf{Y}=(Y_{(1)},\dots,Y_{(m)})$. \\
    Let $\Omega=F^{3+m}$ be a sample space.
    Let $(C,D,E,\mbf{Y})\in\Omega$, with $D$ uniform and independent of $C,E$ and $\mbf{Y}$. \\
    Let $T=h'_{C,D,E}(\mbf{Y})=D+CE+\sum_{j=1}^{m}C^{j+1}Y_{(j)}$. \\
    Then $T$ is independent of $\mbf{Y}$.
\end{theorem}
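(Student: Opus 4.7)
The plan is to exploit the fact that $D$ is uniform on $F$ and independent of $(C, E, \mathbf{Y})$, so that adding $D$ acts as a one-time pad on the rest of the expression defining $T$. This should immediately force $T$ to be uniform on $F$ regardless of $\mathbf{Y}$, which is exactly the independence claim.

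First, I would fix arbitrary target values $t \in F$ and $\mathbf{y} \in F^m$ and compute $\Pr(T = t \mid \mathbf{Y} = \mathbf{y})$ by further conditioning on $(C, E) = (c, e)$ via \cref{eq:prob_prop}. Given $(C, E, \mathbf{Y}) = (c, e, \mathbf{y})$, the event $T = t$ is equivalent to $D$ taking the single prescribed value $t - ce - \sum_{j=1}^{m} c^{j+1} y_{(j)}$. Because $D$ is uniform and jointly independent of $(C, E, \mathbf{Y})$, \cref{eq:prob_uniform} gives this conditional probability as $1/|F|$, irrespective of the specific $c$, $e$, $\mathbf{y}$.

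Next, I would marginalize over $(C, E)$: since the inner conditional probability is the constant $1/|F|$, the sum $\sum_{c, e} \tfrac{1}{|F|} \Pr(C = c, E = e \mid \mathbf{Y} = \mathbf{y})$ collapses to $1/|F|$ by \cref{eq:prob_sum_to_one}. Hence $\Pr(T = t \mid \mathbf{Y} = \mathbf{y}) = 1/|F|$ for every $t$ and $\mathbf{y}$. The same computation, without conditioning on $\mathbf{Y}$, also yields $\Pr(T = t) = 1/|F|$, so the criterion in \cref{eq:prob_independent} is satisfied and $T$ is independent of $\mathbf{Y}$.

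The argument involves no real mathematical obstacle; the only thing to watch carefully is that the independence of $D$ must be from the \emph{joint} variable $(C, E, \mathbf{Y})$ (which is exactly what the hypothesis provides), since otherwise the step that fixes $(c, e, \mathbf{y})$ and concludes $D$ remains uniform would fail. Apart from that bookkeeping, the proof is essentially the standard one-time-pad observation applied inside a conditional probability.
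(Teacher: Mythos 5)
Your proof is correct and rests on the same underlying idea as the paper's: the uniform $D$, independent of everything else, acts as a one-time pad on $X = CE+\sum_{j=1}^{m}C^{j+1}Y_{(j)}$, which the paper packages as an appeal to \Cref{lem:indepsum}. If anything, your explicit conditioning on $\mathbf{Y}$ is slightly more complete than the paper's one-line citation, since that lemma literally concludes independence of $T$ from $X$ rather than from $\mathbf{Y}$, and the conditional-probability computation you spell out is exactly what bridges that gap.
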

\begin{proof}
	This follows directly from $D$ being uniform and independent of $X$ with $T=D+X$, where $X=CE+\sum_{j=1}^{m}C^{j+1}Y_{(j)}$, due to \Cref{lem:indepsum} in \Cref{app:hash_secret}.
\end{proof}

\Cref{thm:hash_function_confidentiality} tells us that no information is obtained from the tag ($o^A$ as $T$) about the secret ($S^A$ as $\mbf{Y}$).  Thus, given the perfectly secure confidentiality provided by a Shamir secret sharing scheme for up to $k-1$ shares being known, publishing the tag does not impact this confidentiality.

Given a Shamir secret sharing scheme, the resulting secret is malleable: a modification of any share will modify the reconstructed secret in a known way, in the same way that modifying the ciphertext of one-time-pad encryption modifies the decrypted plaintext.  Normally, a message authentication code would be employed to allow detection of such a change, but this needs a shared key to implement.  Transmitting a validation key via the same secret sharing scheme violates the normal premise for authentication: that the validation key is assured to be the same at both sides.  \Cref{thm:secret_validation} gives us a way to transmit such a key using the same (malleable) secret sharing scheme while providing authenticity, under the same premise that the secret sharing scheme already has.  We believe that this is a novel construction \footnote{A related objective is presented in \cite{ostrev2019composable}.}.

\begin{theorem}\label{thm:secret_validation}
    Denote $\mbf{y}=(y_{(1)},\dots,y_{(m)})$ and $\mbf{y}'=(y'_{(1)},\dots,y'_{(m)})$.
    Let $\Omega=F^3$ be a sample space with uniform probability.
    Let $h'_{C,D,E}(\mbf{y})=D+CE+\sum_{j=1}^{m}C^{j+1}y_{(j)}$ define a family of hash functions with random variables $(C,D,E)\in\Omega$ as selection parameters. Let $m\ne0.$
    Then, $\max_{t',c',d',e',\mbf{y}'\ne\mbf{y}} \Pr(t+t'=h'_{C+c',D+d',E+e'}(\mbf{y}+\mbf{y}')~|~t=h'_{C,D,E}(\mbf{y}))\le\min(\frac{m+1}{|F|},1)$.
    \end{theorem}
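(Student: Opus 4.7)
The plan is to use the conditioning to eliminate $D$ and then reduce the forgery equation to a polynomial identity in the remaining uniform variables $(C,E)$. Since $D$ is uniform on $F$ and independent of $(C,E)$, and for every fixed $(C,E)$ the map $D \mapsto D + CE + \sum_{j} C^{j+1} y_{(j)}$ is a bijection of $F$, the hash $T = h'_{C,D,E}(\mathbf{y})$ is itself uniform on $F$ and independent of $(C,E)$. Hence, conditioning on $T = t$ leaves $(C,E)$ uniformly distributed on $F^2$, while $D$ is determined as $D = t - CE - \sum_{j} C^{j+1} y_{(j)}$.

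Substituting this value of $D$ into the forgery condition $t+t' = h'_{C+c',D+d',E+e'}(\mathbf{y}+\mathbf{y}')$ and simplifying, every $D$-term cancels and the condition reduces to $R(C,E) = 0$ where
\[
R(C,E) = d' + c'e' + e'C + c'E + \sum_{j=1}^{m}\bigl[(C+c')^{j+1}(y_{(j)}+y'_{(j)}) - C^{j+1} y_{(j)}\bigr] - t'.
\]
The key structural observation is that $E$ appears in $R$ only through the single term $c'E$, since the sum is a pure polynomial in $C$. Thus $R(C,E) = Q(C) + c'E$ for some univariate $Q \in F[C]$, and the proof reduces to bounding $\Pr(Q(C) + c'E = 0)$ with $(C,E)$ uniform on $F^2$.

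The bound follows from a case split on $c'$ combined with a polynomial-root argument. If $c' \neq 0$, then for each value of $C$ there is exactly one $E$ satisfying $R(C,E) = 0$, giving $\Pr(R(C,E)=0) = 1/|F| \leq (m+1)/|F|$. If $c' = 0$, the factors $(C+c')^{j+1}$ collapse to $C^{j+1}$, and $R$ simplifies to $Q(C) = (d'-t') + e'C + \sum_{j=1}^{m} y'_{(j)} C^{j+1}$, independent of $E$. The nontriviality condition in the maximum (interpreted as the perturbation of the secret being nonzero) ensures that some $y'_{(j)} \neq 0$, so $Q$ is a nonzero polynomial in $F[C]$ of degree at most $m+1$, has at most $m+1$ roots in $F$, and thus $\Pr(Q(C)=0) \leq (m+1)/|F|$. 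Combining the two cases and using the trivial upper bound of $1$ yields $\min((m+1)/|F|,\, 1)$.

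The main obstacle is the bookkeeping: one has to carry out the expansion carefully, verify that $D$ really cancels, and check both that $E$ appears only linearly (making the $c' \neq 0$ case immediate) and that when $c' = 0$ the surviving polynomial in $C$ is genuinely nonzero whenever $\mathbf{y}'$ is, rather than being hidden by cross terms from the $(C+c')^{j+1}$ expansions. The uniformity of $(C,E)$ after conditioning, while conceptually the most delicate point, follows by the standard one-time-pad argument using the uniform independent $D$.
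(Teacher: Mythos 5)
Your proposal is correct and follows essentially the same route as the paper's proof (Theorem C.3 in Appendix C): condition on the tag to pin down $D$ as a function of $(C,E)$ while leaving $(C,E)$ uniform, subtract to cancel $D$, and split on $c'$ with a root-counting argument for the resulting polynomial. Your observation that $E$ enters only through the linear term $c'E$ even streamlines the paper's $c'\ne 0$ case, which there is split into constant and non-constant sub-cases in $C$; otherwise the two arguments coincide, including the reading of the nontriviality condition as $\mbf{y}'\ne 0$.
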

\begin{proof}
	See \cref{thm:shamir_correctness} in \Cref{app:hash_secret}.
\end{proof}

\Cref{thm:secret_validation} gives us an upper bound on the probability of validating a secret with a nontrivial alteration, with three confidential uniform elements and the alteration constrained to \textit{addition} of a vector to $(C,D,E,\mbf{y})$. The premise for the theorem is assured by the secret sharing scheme. There are $n\choose k$ subsets of $k$ shares and hence reconstructions, but a probability is upper-bounded at $1$, giving an upper bound on the probability of any of the ${n\choose k}$ reconstructed secret candidates passing the validation of $\epsilon\le\min({n\choose k}\frac{m+1}{|F|},1)$.

Combining \Cref{thm:secret_confidentiality,thm:secret_linearity,thm:hash_function_confidentiality,thm:secret_validation} leads to the following result:

The $(n,k)$-Shamir secret sharing scheme and a polynomial validation code for transmitting $3+m$ elements of the same finite field $F$, where the first 3 elements are uniform, of which the first $3$ are consumed as $u^A$ and the remaining $m$ elements are $S^A$, has
\begin{itemize}
    \setlength\itemsep{0em}
    \item $\epsilon$-secure correctness with $\epsilon\le\min({n\choose k}\frac{m+1}{|F|},1)$
    \item perfectly secure confidentiality
\end{itemize}
against a computationally unbounded adversary who can access and modify up to $k-1$ shares and block but not access or modify any of the remaining shares.

This leverages the secrecy provided by the sharing scheme to deliver the key used for secure validation of the secret that is simultaneously delivered.  This is novel in the sense that it ensures correctness (authentication) in addition to retaining the confidentially of the secret sharing scheme without imposing additional security requirements, i.e., it assumes only that no more than $k-1$ of the shares are compromised, as with the sharing scheme, whereas normally the key for validation would be assumed to be pre-shared.

\subsection{Composability and constructive cryptography}\label{subsec:abstrac_crypto}

As a cryptographic protocol is often combined with many other protocols, it is important to prove the security of the protocol in a composable security framework. The composability result in such a framework asserts that in analyzing the security of a complex protocol, one can simply decompose it into various subprotocols and analyze the security of each. Provided that each real subsystem constructed by a subprotocol is close to an ideal subsystem within some $\epsilon$, which is quantified by some distance measure (a pseudo-metric in abstract cryptography), the real system constructed from the combined protocol will then be close to the combined ideal system. The sum of the $\epsilon$-values for the subprotocols gives an $\epsilon$-value for the combined protocol.

The abstract cryptography framework \cite{MaurerRenner2011} uses a top-down approach. In this framework, one states the definitions and proves the security from the highest possible level of abstraction and proceeds downwards. When one deals with a specific instantiation of an abstract system in a lower level, as long as the lower level system satisfies the properties required in the higher-level proof, the composed protocol is secure. 

The constructive cryptography \cite{Maurer2011} is an application of the abstract cryptography framework to defining classical cryptographic primitives in a constructive way. In this framework, one specifies the resources that are used by a protocol, their required properties and some desired functionalities of an ideal system. If a protocol constructs the ideal system from the given resources, then it is secure. A protocol that is secure in this sense is also universally composable secure. We briefly review main concepts and terminologies from the constructive cryptography below and refer to \cite{Maurer2011} for further details.

\subsubsection{Resource}

An $\mc{I}$-resource is a system with interface label set $\mc{I}$ (e.g. $\mc{I} = \{A, B, E\}$). Resources can be composed together via the parallel composition operation. For two resources $\mf{R}_1$ and $\mf{R}_2$, we write $\mf{R}_1 \parallel \mf{R}_2$ as the resource after the parallel composition. 

On the set of resource systems, one can define a pseudo-metric to quantify the closeness between any two resources. We present the definition of pseudo-metric.
 
\begin{definition}[Pseudo-metric]
 	A function $d: \Omega \times \Omega \to \mathbb{R}^{\ge0}$ is a pseudo-metric on the set $\Omega$ if for all $a, b, c \in \Omega$, $d(a,a) =0, d(a, b) = d(b, a), d(a, b) \leq d(a,c) + d(c, b)$. 
\end{definition}

If the pseudo-metric also satisfies $d(a, b) = 0 \implies a=b$, then it is a metric.

\subsubsection{Converter}

A converter system, usually denoted by a Greek letter (e.g. $\pi$) is a system with two interfaces, an inside interface and an outside interface, which transforms one resource into another. The inside interface of a converter can be connected to an interface of a resource, and the outside interface becomes the new interface of the constructed resource. Two converters can be composed together via either serial or parallel composition operator. For two converters $\alpha$ and $\beta$, we write $\alpha \beta$ as the converter formed by serial composition and $\alpha \parallel \beta$ as the converter formed by parallel composition.

A protocol $\pi = \{\pi_i\}_{i \in \mc{J}}$ is a set of converters $\pi_i$ with $\mc{J} \subseteq \mc{I}$.

\subsubsection{Distinguisher and distinguishing advantage}

For $n$-interface resources, a distinguisher $\mf{D}$ is a system with $n+1$ interfaces, where $n$ interfaces connect to the interfaces of a resource and the other interface outputs a bit. In the constructive cryptography framework (as in many other composable security frameworks), the real and ideal systems are interactive black boxes that are given to the distinguisher with equal probability. Its task is to guess which system is in the black box. The output bit indicates its guess.

For a class of distinguishers $\mathbb{D}$, the distinguishing advantage for two resources $\mf{R}$ and $\mf{S}$ is%
\begin{aeq}\label{eq:distinguishing_advantage}
    d(\mf{R}, \mf{S}) := \max_{\mf{D} \in \mathbb{D}} (\Pr[\mf{D}\mf{R} = 1] - \Pr[\mf{D}\mf{S}=1]),
\end{aeq}%
where $\mf{D}\mf{R}$ is the binary random variable corresponding to $\mf{D}$ connected to $\mf{R}$, and $\mf{D}\mf{S}$ is defined similarly. Each binary random variable may take the value $1$ if the distinguisher $\mf{D}$ guesses the resource that it is connected to is the ideal resource system, and take the value $0$ if it guesses the real resource system\footnote{Exchanging 0 and 1 does not impact on the distinguishing advantage.}.  

As we are interested in information-theoretic security, the class of distinguishers $\mathbb{D}$ is the set of all computationally unbounded distinguishers, which may have access to quantum computers. In particular, for the DSKE protocol, we consider only classical systems. For two classical systems $\mf{R}$ and $\mf{S}$, once we show that they are indistinguishable for the set of all classical distinguishers, then they are automatically indistinguishable for the set of all quantum distinguishers since classical computers can simulate quantum ones (even if the simulation might be inefficient) \cite[Remark 3]{Portmann2014}. Thus, we consider only the set of all possible classical distinguishers in this paper.

We summarize a few useful properties of the distinguishing advantage. The distinguishing advantage defined in \Cref{eq:distinguishing_advantage} is a pseudo-metric on the set of resources. It is non-increasing under (serial or parallel) composition of any two systems, that is, for any resource systems $\mf{R}, \mf{S}, \mf{T}$, and any converter $\alpha$ (with $\alpha^i$ denoting $\alpha$ converting interface $i$),%
\begin{aeq}\label{eq:metric_serial_composition}
    d(\alpha^i\mf{R}, \alpha^i\mf{S})& \leq d(\mf{R}, \mf{S}),\\
\end{aeq}%
and%
\begin{aeq}\label{eq:metric_parallel_composition}
    d(\mf{R} \parallel \mf{T}, \mf{S} \parallel \mf{T}) &\leq d(\mf{R}, \mf{S}),\\
    d(\mf{T} \parallel \mf{R}, \mf{T} \parallel \mf{S}) &\leq d(\mf{R}, \mf{S}).\\
\end{aeq}%
 
\subsubsection{Composable security definition}
 
\begin{definition}\label{def:composable_security}
    Let $\mf{R}$ and $\mf{S}$ be resources with the interface label set $\mc{I} =\{A, B, E\}$. We say that a protocol $\pi = (\pi^A, \pi^B)$ securely constructs $\mf{S}$ out of $\mf{R}$ within $\epsilon$ if the following conditions hold:
    \begin{enumerate}[label=(\roman*)]
        \item For converters $\alpha^E$ and $\gamma^E$ that emulate an honest behaviour at the $E$-interface of each system and block any distinguisher from the access of $E$-interface of each system,%
        \begin{aeq}
            d(\pi\mf{R}\alpha^E, \mf{S}\gamma^E) \leq \epsilon.
		\end{aeq}%
		\item There exists a converter $\sigma^E$ such that%
		\begin{aeq}
			d(\pi \mf{R}, \mf{S}\sigma^E) \leq \epsilon.
		\end{aeq}%
	\end{enumerate}
    We use the construction notation $\mf{R} \xrightarrow{(\pi, \epsilon)} \mf{S}$ for this case.
\end{definition}
\begin{remark}
    The first condition in \cref{def:composable_security} captures the correctness of the protocol when no adversary is present. In this case, the adversarial controls covered by $\alpha^E$ and $\gamma^E$ are not accessible to the distinguisher. The second condition captures the situation when an adversary is present. In that case, the converters $\alpha^E$ and $\gamma^E$ are removed so that the distinguisher has full access to Eve's interfaces, which are the $E$-interface in the real system, and the $E$-interface of a simulator $\sigma^E$ that is attached to the ideal system. 
\end{remark}

\subsubsection{Composability theorem}
 
\begin{theorem}[{\cite[Theorem 1]{Maurer2011}}]\label{thm:composability}
	The security definition in \cref{def:composable_security} is generally composable if the pseudo-metric $d$ is compatible with the cryptographic algebra\footnote{We refer to \cite{Maurer2011} for the definitions of cryptographic algebra and compatibility of the pseudo-metric with cryptographic algebra. The pseudo-metric used in this paper is compatible with the underlying cryptographic algebra since we consider information-theoretic security so that if a distinguisher in the set of all distinguishers $\mathbb{D}$ is composed with another arbitrary system, it is still in $\mathbb{D}$. To avoid further distraction, we omit this definition here.}. The following statements hold:
	\begin{enumerate}[label=(\roman*)]
		\item If a protocol $\pi$ securely constructs a system $\mf{S}$ out of a system $\mf{R}$ within $\epsilon$ and another protocol $\pi'$ securely constructs a system $\mf{T}$ out of a system $\mf{S}$ within $\epsilon'$, then the serial composition $\pi\pi'$(running the protocol $\pi'$ after the protocol $\pi$) securely constructs the system $\mf{T}$ out of the system $\mf{R}$ within $\epsilon+\epsilon'$, i.e.,%
        \begin{aeq}
			& (\mf{R} \xrightarrow{(\pi, \epsilon)} \mf{S}) \;\land\; (\mf{S} \xrightarrow{(\pi', \epsilon')} \mf{T}) \\
            \implies & \mf{R} \xrightarrow{(\pi\pi', \epsilon+\epsilon')} \mf{T};
		\end{aeq}%
		\item If a protocol $\pi$ securely constructs a system $\mf{S}$ out of a system $\mf{R}$ within $\epsilon$ and another protocol $\pi'$ securely constructs a system $\mf{S}'$ out of a system $\mf{R}'$ within $\epsilon'$, then the parallel composition $\pi \parallel \pi'$ securely constructs the system $\mf{S} \parallel \mf{S}'$ out of the system $\mf{R} \parallel \mf{R}'$ within $\epsilon+\epsilon'$, i.e.,%
        \begin{aeq}
			& (\mf{R} \xrightarrow{(\pi, \epsilon)} \mf{S}) \;\land\; (\mf{R}' \xrightarrow{(\pi', \epsilon')} \mf{S}') \\
            \implies & \mf{R}\parallel\mf{R}' \xrightarrow{(\pi\parallel\pi', \epsilon+\epsilon')} \mf{S} \parallel \mf{S}';
        \end{aeq}%
        \item When a trivial converter, which applies the identity transformation to the resource that it connects to, is applied to a system $\mf{R}$, it perfectly constructs the system $\mf{R}$ out of itself, i.e.,%
        \begin{aeq}
            \mf{R} \xrightarrow{(\mathbf{1}, 0)} \mf{R},
        \end{aeq}%
        where $\mathbf{1}$ denotes the trivial converter.
	\end{enumerate}
\end{theorem}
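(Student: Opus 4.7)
The plan is to build the entire argument on two tools already established in the excerpt: the triangle inequality for the pseudo-metric $d$, and the non-increasing property of $d$ under serial and parallel composition of converters, \Cref{eq:metric_serial_composition,eq:metric_parallel_composition}. Claim (iii) is immediate, since any pseudo-metric satisfies $d(\mf{R},\mf{R})=0$: the trivial converter $\mathbf{1}$ leaves $\mf{R}$ unchanged, so both items of \Cref{def:composable_security} hold with $\epsilon=0$ by choosing $\sigma^E=\mathbf{1}$ and $\alpha^E=\gamma^E$.

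For the serial claim (i), I would let $\sigma_1^E$ be the simulator witnessing $\mf{R}\xrightarrow{(\pi,\epsilon)}\mf{S}$ and $\sigma_2^E$ the simulator witnessing $\mf{S}\xrightarrow{(\pi',\epsilon')}\mf{T}$, and define the composite simulator $\sigma^E := \sigma_2^E\sigma_1^E$, attaching $\sigma_2^E$ to the $E$-interface of $\mf{T}$ and $\sigma_1^E$ to the $E$-interface of $\mf{T}\sigma_2^E$. The key move is to interpose the hybrid $\pi'\mf{S}\sigma_1^E$ between the real and ideal systems and apply the triangle inequality:
\begin{aeq}
d(\pi\pi'\mf{R},\mf{T}\sigma_2^E\sigma_1^E)
&\leq d(\pi\pi'\mf{R},\pi'\mf{S}\sigma_1^E) \\
&\quad+ d(\pi'\mf{S}\sigma_1^E,\mf{T}\sigma_2^E\sigma_1^E).
\end{aeq}
The first distance is bounded by $\epsilon$ upon attaching the outer converter $\pi'$ on the honest interfaces to both sides of $d(\pi\mf{R},\mf{S}\sigma_1^E)\le\epsilon$ via \Cref{eq:metric_serial_composition}. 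The second is bounded by $\epsilon'$ upon attaching the converter $\sigma_1^E$ at the $E$-interface to both sides of $d(\pi'\mf{S},\mf{T}\sigma_2^E)\le\epsilon'$, again via \Cref{eq:metric_serial_composition}. The honest-behavior condition is proved by the identical triangle-inequality argument with $\alpha^E,\gamma^E$ in place of the simulators.

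For the parallel claim (ii), I would set $\sigma^E := \sigma_1^E\parallel\sigma_2^E$ and interpose the hybrid $\mf{S}\sigma_1^E\parallel\pi'\mf{R}'$ between the real and ideal sides:
\begin{aeq}
&d(\pi\mf{R}\parallel\pi'\mf{R}',\mf{S}\sigma_1^E\parallel\mf{S}'\sigma_2^E) \\
&\leq d(\pi\mf{R}\parallel\pi'\mf{R}',\mf{S}\sigma_1^E\parallel\pi'\mf{R}') \\
&\quad+ d(\mf{S}\sigma_1^E\parallel\pi'\mf{R}',\mf{S}\sigma_1^E\parallel\mf{S}'\sigma_2^E).
\end{aeq}
\Cref{eq:metric_parallel_composition} then bounds the two terms by $\epsilon$ and $\epsilon'$ respectively, taking $\pi'\mf{R}'$ (resp.\ $\mf{S}\sigma_1^E$) as the unchanged side $\mf{T}$. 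The honest-behavior case follows by the same structure with $\alpha^E,\gamma^E$ replacing the simulators.

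I expect the main obstacle to be bookkeeping rather than a deep argument: one must ensure that the composite simulator $\sigma_2^E\sigma_1^E$ is well typed, i.e.\ that $\sigma_2^E$ plugs into the $E$-interface of $\mf{T}$ to yield a system whose $E$-interface matches the one $\sigma_1^E$ expects on $\mf{S}$, and symmetrically for the honest emulators; likewise in the parallel case one must confirm that replacing a subsystem by a converter-wrapped subsystem on one factor does not interfere with the other factor. This is exactly the \emph{compatibility of the pseudo-metric with the cryptographic algebra} invoked in the hypothesis of the theorem, so the proof reduces to citing that compatibility rather than reproving it.
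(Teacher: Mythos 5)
The paper does not actually prove this theorem; it imports it verbatim as \cite[Theorem~1]{Maurer2011}, so there is no in-paper proof to compare against. Your reconstruction is the standard argument behind that cited result --- triangle inequality for the pseudo-metric plus the non-increase properties \cref{eq:metric_serial_composition,eq:metric_parallel_composition}, with the composite simulator $\sigma_2^E\sigma_1^E$ (resp.\ $\sigma_1^E\parallel\sigma_2^E$) and a hybrid interposed between real and ideal --- and it is correct. The one point you wave at but should make explicit is in the availability condition for serial composition: the second hop gives you $d(\pi'\mf{S}\beta^E,\mf{T}\delta^E)\le\epsilon'$ for \emph{its} honest emulator $\beta^E$ at $\mf{S}$'s $E$-interface, while the first hop hands you $\pi'\mf{S}\gamma^E$ with the emulator $\gamma^E$ from the first construction; chaining the two triangle-inequality terms requires that these coincide (or that the honest emulator is canonical per resource), which is part of what the compatibility hypothesis is there to guarantee, as you correctly anticipate in your closing remark.
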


\subsection{Synopsis of DSKE protocol}\label{subsec:skeleton_protocol_description}
We give a synopsis of the DSKE protocol without technical details to assist a high-level understanding of our security proof. We refer to \Cref{app:protocol} for a detailed description of the protocol steps. 

\begin{figure}[t]
	\includegraphics[width=\linewidth]{./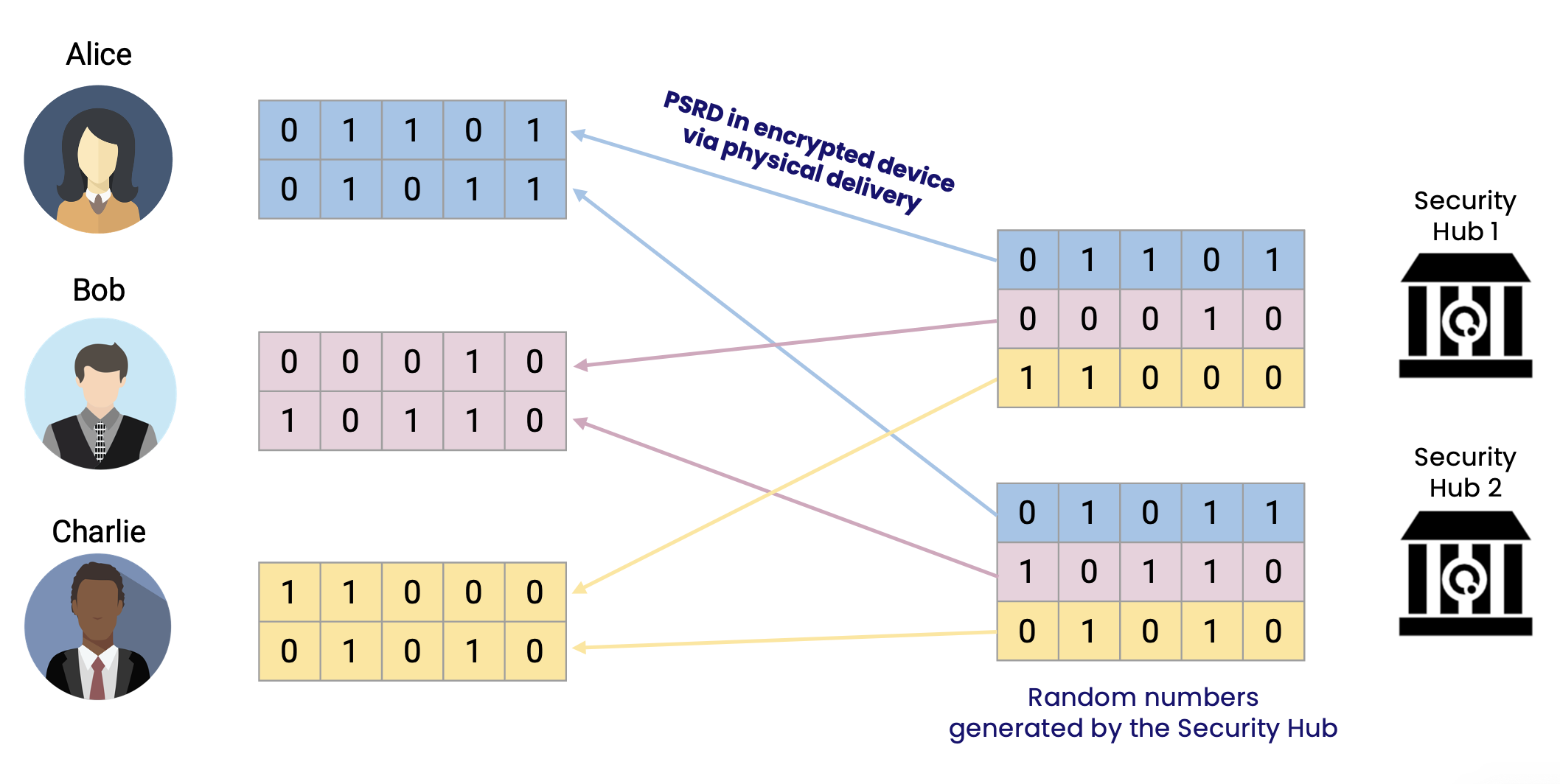}
	\caption{(Modified from Figure 1 of \cite{Lo2022}) The results of the one-time set-up: Steps 1 (\textit{PSRD generation and distribution}) and 2 (\textit{Peer identity establishment}) of the protocol. DSKE users Alice, Bob and Charlie share an ordered table of PSRD with each of the Security Hubs. Each Security Hub only knows its own part of the users' tables. For this illustration only, the PSRD is shown as bits.}
	\label{fig:protocol}
\end{figure}

We work with a two-user key agreement protocol in a network setting with a large number, $N$, of potential ``end users'' in the presence of a number, say $n$, of third parties called Security Hubs. The Security Hubs are numbered from $1$ to $n$ and an identifier $P_i$ is assigned to the $i$th Hub. The main goals are to guarantee that both users agree on the same secret and to protect the privacy of the agreed secret from potential adversaries, including other end users and the Security Hubs. During the one-time set-up (Steps (1) and (2) of the protocol as described in \Cref{app:protocol}), secure channels are assumed between the end users and Security Hubs. Those secure channels enable the end users and the Security Hubs to share some pre-shared random data (PSRD). Once the one-time setup has been done, we are in the situation described in \Cref{fig:protocol}. \Cref{fig:protocol} shows an example of a network with the users such as Alice, Bob, and Charlie together with two Security Hubs. Each user shares a table of PSRD with each Security Hub. Note that each Security Hub knows only the values of their own part of the shared random data, but has no information on the values of the shared random data of other Security Hubs.

We assume one-way communication from Alice to Bob, that is, Alice requests via the Security Hubs to exchange a secret with Bob. Alice is not interested in receiving information from Bob at all during the execution of the protocol, and may have only unidirectional communication available. 

\begin{remark}
    Note that if Alice and Bob share a string of random secret bits for use as a one-time-pad, it is important that Alice and Bob do not use the same sequence of values simultaneously for encryption. If they were to do so, when Alice uses a key bit, $k_i$ to encrypt a message, $a_i$, then she is sending $c_i = k_i \oplus a_i$ to a communication channel, where $\oplus$ denotes bitwise exclusive-or, and should Bob (inadvertently) uses the same key, $k_i$, to encrypt another message, $b_i$, then he is sending $d_i = k_i \oplus b_i$. Then, an eavesdropper who possesses both $c_i$ and $d_i$ can compute the parity of the two bits, $c_i \oplus d_i$, to obtain $c_i \oplus  d_i = (k_i \oplus a_i) \oplus (k_i \oplus b_i) = a_i \oplus b_i$, thus recovering the parity of the pair $(a_i, b_i)$. This would be a serious security breach. To avoid the above problem, it will be simplest to pre-assign each random string to a particular sender so that only one party is allowed to be the sender of the communication using the particular sequence as a one-time-pad key or one-time authentication key.
\end{remark}
\begin{remark}
    In practical two-way communication, the parties may need two separately managed keys (one for securing communication from Alice to Bob and the other for securing communication from Bob to Alice).  The two keys could, for example, be obtained through two iterations of the DSKE protocol.  However, for simplicity, we will not discuss this two-way communication case here. After the initial setup phase (i.e., Steps (1) and (2) to be introduced in \Cref{app:protocol}), all subsequent communications in the protocol can be done through insecure classical channels such as the Internet, radio, or phone. Note that no quantum channels are needed in the subsequent communication. This makes DSKE versatile.
\end{remark}

In the DSKE protocol, Alice generates $n$ shares using PSRD shared between Alice and each Security Hub in an $(n, k)$-threshold scheme of Shamir's secret sharing scheme \cite{shamir1979share}, where $k$ is the minimum number of shares needed to reconstruct the secret. She also generates a secret-authenticating tag $o^A:=h'_{u^A}(S^A)$, where $u^A \parallel S^A$ is the secret from the $(n, k)$-threshold scheme, and $h'_{u^A}$ is a hash function with its parameter $u_A$, which is chosen from a family of 2-universal hash functions (also see \cref{eq:hash_family_2}). She encrypts each share and sends the $i$th share $Y_i$ along with the secret-authenticating tag to the Hub $P_i$ via authenticated channels. We note that the secret-authenticating tag is the same for all Hubs. 

After receiving the secret-authenticating tag and the encrypted share, an honest Hub decrypts the share, and then re-encrypts the share using the PSRD shared between the Hub and Bob. It forwards the secret-authenticating tag and the newly encrypted share to Bob via an authenticated channel. 

After Bob receives enough messages from Hubs, he reconstructs all possible values of the secret from any $k$ shares received. Then in the secret validation step, he validates each possible candidate against the secret-authenticating tag, which is chosen to be the same secret-authenticating tag sent by at least $k$ Hubs. If there is no unique secret that passes the secret validation step, he aborts the protocol. 
\subsection{Threat model}\label{sec:threat_model}

A collection of adversarial entities can include a coalition of end users other than Alice and Bob, eavesdroppers, and a subset of the Security Hubs. This set of adversaries may collude to attempt to compromise the objective of the protocol between Alice and Bob. We call this collection of adversarial entities \textit{Eve}. 

An honest Security Hub follows its part of the protocol correctly and maintains confidentiality of its own data. A compromised Security Hub may deviate from the protocol. No limits are placed on compromised Hubs, other than that they do not have access to confidential information nor are they able to modify any information held by honest parties.

To analyze the security (i.e. correctness and confidentiality in \Cref{subsec:ideal_system}) of the DSKE protocol, Eve is allowed to attempt to tamper with the communication by modifying the messages. She is free to listen to all communications except for the initial sharing of tables by honest Security Hubs (conducted through secure channels). Eve can fully control those compromised Security Hubs, including knowing all their tables, and knowing and modifying all messages that come from or are delivered by those compromised Hubs.

As a robustness analysis of a protocol is concerned with an honest implementation of the protocol, which is a modified threat model from that of the correctness and confidentiality analysis, we define the behaviour of Eve in this modified threat model, which we call \textit{passive Eve}: Eve is passive on all communication links, that is, she is allowed to listen to all the communications but she does not tamper; she is still given the ability to fully control compromised Security Hubs. When Eve is passive, we show that the DSKE protocol completes (i.e., does not abort) with a high probability.

\subsection{Assumptions}\label{sec:assumptions}

We list assumptions used in our security proof:
\begin{enumerate}[label = \roman*)]
	\item The pre-shared random data (PSRD) are securely delivered by every Security Hub that is not compromised to each of Alice and Bob. By \textit{securely delivered}, we mean that the confidentiality and integrity of the data is maintained, and that the delivery is assured as being from and to the parties of the correct identity.  This can be achieved through secure channels between Security Hubs and end users. \footnote{PSRD can be delivered by physically shipping a secure data storage device or via QKD links.}
	\item The two users, Alice and Bob, are both honest.
	\item A number of the Security Hubs might be compromised, and this number has a known upper bound. 
	\item For the robustness analysis, a number of the Security Hubs might not correctly execute their part in the protocol, either due to unavailability, communication failure, or compromise, and this number has a known upper bound. This is incorporated as an assumption that Eve is passive on the communication links.  A communication link is assumed to provide the originating identity to the receiver, and if this is incorrectly provided, this is counted as a communication failure.
\end{enumerate}

\FloatBarrier

\section{Security of the skeleton DSKE protocol} \label{sec:security}

In this section, we prove the security of a variant of the DSKE protocol, which we call the \textit{skeleton} DSKE protocol. It differs from the general DSKE protocol (which we summarized in \Cref{subsec:skeleton_protocol_description}; also see \Cref{app:protocol} for a detailed description) only by the assumption that channel authentication is perfectly secure, where a perfectly secure authentication scheme is defined as an $\epsilon$-secure authentication scheme for which $\epsilon = 0$. We call this channel an authenticated channel. In an authenticated channel, an adversary is free to eavesdrop on the communication and to tamper with a message, but the receiver will detect with certainty whether the message has been altered. This may also be thought of as the limiting behaviour as the size of a message authentication code increases without limit.

We analyze the security of the skeleton DSKE protocol in the framework of constructive cryptography as briefly reviewed in \Cref{sec:preliminary}. Further details about the constructive cryptography can be found in \cite{Maurer2011}. Also see \cite{Portmann2022}.

\subsection{Ideal system} \label{subsec:ideal_system}

In the constructive cryptography framework, one can define an ideal system to capture desired functionalities and properties that one hopes to securely realize by the protocol of interest. As the DSKE protocol deals with a multi-party setting where the security claims are against some adversarial subsets, we can require the ideal resource system constructed by the DSKE protocol to respect the security claims that we aim to make for the DSKE protocol. 

In our setting with Alice, Bob, and an adversary Eve, the ideal resource should have three interfaces $A, B$ and $E$. It should produce a secret $S^A$ for Alice and a secret $S^B$ to Bob, which is supposed to be the same as $S^A$ in the case that the protocol completes. Given that Bob can abort in the DSKE protocol when no valid secret can be reconstructed to pass the secret validation step or when there are multiple different reconstructed secrets that pass the secret validation step, we should also allow the ideal resource system to abort under the same conditions which the DSKE protocol does. To indicate that the protocol was aborted, the ideal system sets $S^B$ to be the symbol $\perp$. Security Hubs in the DSKE protocol are treated as resources in our analysis and thus are located inside the ideal resource system. As Eve can control all compromised security Hubs and all communication channels, the ideal system should give Eve the ability to control them through its $E$-interface. Each Security Hub resource has three interfaces, one for the sender, one for the receiver and one for Eve. Each Hub can operate in one of two modes: honest or compromised. In the honest mode, the Hub simply relays the input it receives from the sender to the receiver, and does not output anything at the $E$-interface. In the compromised mode, the Hub outputs the sender's input at the $E$-interface, and uses an input from the $E$-interface to set the output of the receiver's interface. 

While it is only required to specify properties and desired functionalities of the ideal system in the constructive cryptography framework, to better understand the behaviour of this ideal system in the DSKE protocol setting, it is still helpful to think of how this ideal resource system works internally. As shown in \Cref{fig:ideal_resource}, this resource system can be thought as containing a secure key resource to generate a uniformly distributed secret $S^A=S^B$, which will be distributed to Alice and Bob after checking the abort condition (if the protocol aborts, $S^B$ will be set to $\perp$, the flag for aborting the protocol). The ideal resource system internally contains a (modified) real resource system (see \Cref{fig:real_protocol}) that runs the skeleton DSKE protocol with the secret $S^A$ to determine the abort conditions but ignores the reconstructed secret at Bob's side. In particular, it uses the Shamir $(n,k)$-threshold scheme to generate shares by using $S^A$ as the secret. The secret-authenticating tag $o^A$ is computed using the secret $S^A$ as well as an additional key $u^A$ produced by its internal secret key resource. The ideal system gives $S^A$ as its $A$-interface's output and $S^B$ as its $B$-interface's output.

Since our threat model allows Eve to control compromised Hubs, we need to allow the ideal system to give the full access to those compromised Hubs through its $E$-interface. To discuss the inputs and outputs at the $E$-interface of the ideal system, we break the overall $E$-interface into the $E$-interface of each Hub $P_i$. We treat the Hub $P_i$ and its corresponding communication channels as one entity for ease of discussion and we simply say Hub $P_i$ to mean the entire entity. We use $\overline{Y}^E_i$ and $\overline{T}^E_i$ to denote inputs at the Hub $P_i$'s $E$-interface. Similarly, we use $Y^E_i$ and $T^E_i$ to denote the Hub's outputs at the $E$-interface. We use $Z^E_i\in \{00, 01, 10, 11\}$ to denote another input from the Hub $P_i$'s $E$-interface, which can be used to determine the behaviours of two authenticated channels. To fix the meanings, 0 means the authenticated channel transmits the input message faithfully and 1 means it produces an error. For compromised Hubs, $\overline{Y}^E_i$ can take any allowed value for a share and $\overline{T}^E_i$ can take any allowed value for the secret-authenticating tag. A compromised Hub $P_i$ gives its share $Y_i$ and the secret-authenticating tag $o^A$ it received as the outputs to its $E$-interface, that is, $Y^E_i := Y_i$ and $T^E_i := o^A$. (Note that there might be other ancillary information to be passed to $T^E_i$. See the discussion about $T_i$ in \Cref{sec:real_system}.) We further assume $Z^E_i:=00$ for compromised Hubs since this allows Eve to fully control the compromised Hubs by $\overline{Y}^E_i$ and $\overline{T}^E_i$.  For honest Hubs, $\overline{Y}^E_i$ and $\overline{T}^E_i$ can only take the value $\perp$ (which is the only symbol in the allowed alphabet for those variables related to honest Hubs), indicating that Eve cannot control those honest Hubs. Honest Hubs do not leak the information about their shares and thus we set $Y^E_i := \perp$ for those honest Hubs. Note that since the secret-authenticating tag is not encrypted, we can simply set $T^E_i := o^A$ for honest Hubs as well.

Since each Security Hub can be either honest or compromised and our security claims for the DSKE protocol are based on the assumption that the number of compromised Hubs is below some given thresholds, the behaviour of this ideal resource system should respect those assumptions. By our assumptions, we assume there is some fixed set $C \subset \{1, \dots, n\}$ with $|C| < k$, of the identifiers of all compromised Hubs. 

\begin{figure}[t]
	\includegraphics[width=\linewidth]{./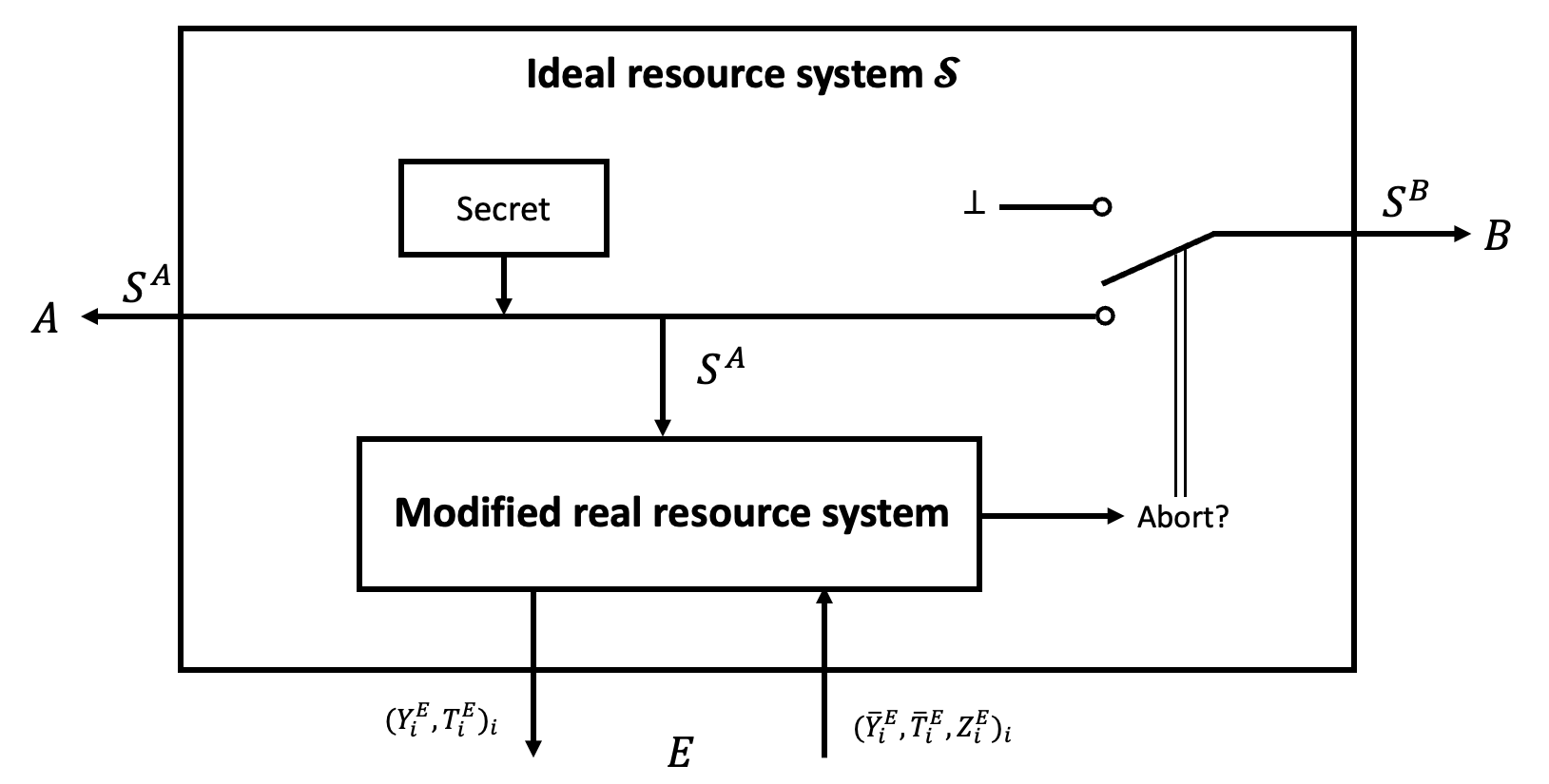}
	\caption{An ideal key distribution resource, which consists of a ``secret'' resource that generates a secret on interfaces $A$ and conditionally $B$ and a modified real resource system that runs the DSKE protocol whose only purpose is to determine whether the protocol aborts. The modified real resource system is the real resource system depicted in \Cref{fig:real_protocol} with the requirement that the secret from running the $(n, k)$-threshold scheme is $S^A$ generated by the secret resource. The ideal system outputs $S^A$ at the $A$-interface, $S^B$ at the $B$-interface. Its $E$-interface is the $E$-interface of the real system with an additional ability (not shown in this diagram) to set the operation mode (i.e., honest versus compromised) of each Security Hub.}
	\label{fig:ideal_resource}
\end{figure}

Here we state some properties of the ideal resource system in terms of the joint probability distribution $Q_{S^A,S^B,(Y^E_i,T^E_i,\overline{Y}^E_i,\overline{T}^E_i, Z^E_i)_i}$ of the inputs and outputs at all its interfaces. This is not an exhaustive list. 
\begin{enumerate}[label=(\roman*).,leftmargin=2em]
	\item Correctness: The marginal distribution of $S^A$ and $S^B$ satisfies that for any $s^A, s^B\in F^m$ such that $s^B \neq s^A$ and $s^B \neq \perp$,%
	\begin{aeq}
		Q_{S^A,S^B}(s^A, s^B) = 0.
	\end{aeq}%
	\item Confidentiality: With $|C| < k$, the conditional probability distribution of $S^A$ conditioned on knowing the values of $(Y^E_j)_{j \in \mc{C}}$ as well as all $T^E_i$ satisfies%
    \begin{aeq}
        Q_{S^A|(Y^E_j)_{j \in \mc{C}}, (T^E_i)_i}=Q_{S^A}. 
    \end{aeq}%
    \item Uniform randomness: The marginal distribution of $S^A$ satisfies that for any $s \in F^m$,%
    \begin{aeq}
        Q_{S^A}(s) = \frac{1}{|F|^m}.
    \end{aeq}%
\end{enumerate}
\begin{remark}
    We note that with the instantiation of the ideal system in \Cref{fig:ideal_resource}, the correctness and uniform randomness of the ideal system are effectively due to the use of a secure key resource (labelled as ``Secret'' in \Cref{fig:ideal_resource}). The confidentiality of the ideal system is effectively due to the confidentiality of the Shamir $(n,k)$-threshold scheme (see \cref{thm:secret_confidentiality}) and that of the secret-authenticating tag (see \cref{thm:hash_function_confidentiality}). This is because its internal modified real system runs the $(n,k)$-threshold scheme using the secret in order to abort under the same condition as the DSKE protocol.
\end{remark}

\subsection{Real system}\label{sec:real_system}

The real system is depicted in \Cref{fig:real_protocol}. In the language of constructive cryptography, the real system uses a set of resources and converters to construct a secure equivalent of the ideal resource system defined in \Cref{subsec:ideal_system}. We define these systems and then discuss how to calculate the distinguishing advantage (a pseudo-metric).

\begin{figure*}[t]
	\includegraphics[width=\linewidth]{./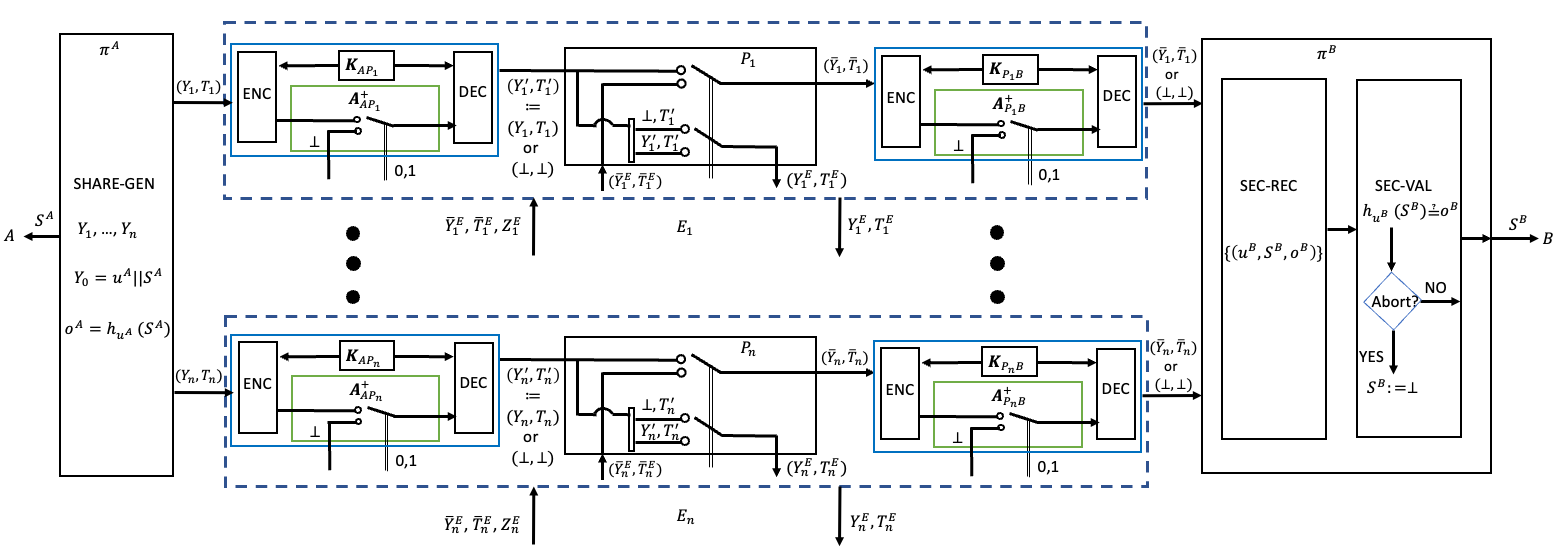}
	\caption{A real key distribution resource using $n$ Security Hubs. Each green box represents an authenticated channel labelled by $\mbf{A}^+$ with a suitable subscript that identifies the communicating parties. The secret key resource $\mbf{K}_{AP_i}$ is used to encrypt $Y_i$ and $\mbf{K}_{P_iB}$ is used to encrypt $\overline{Y}_i$. If a Hub $P_i$ is compromised, Eve determines values of $Y_i, T_i, \overline{Y}_i, \overline{T}_i$ by accepting $(\overline{Y}^E_i, \overline{T}^E_i)$ at the $E$-interface; the system outputs $(Y^E_i, T^E_i):=(Y_i, T_i)$ at the $E$-interface. For honest Hubs, since the ciphertext of $Y_i$ ($\overline{Y}_i$) reveals no information about $Y_i$ ($\overline{Y}_i$), we simply set $Y^E_i:=\perp$ to indicate this. For each Hub $P_i$, a two-bit variable $Z^E_i$ as an input to the $E_i$-interface is used to set the behaviours of two relevant authenticated channels. The $E$-interface of the system consists of $E_1$, \dots, $E_n$, where the alphabets for those inputs and outputs are determined by the operation mode of each corresponding Hub. The operation mode of each Hub is predetermined and cannot be altered through the $E$-interface. Figure legend: ENC: encryption operation, DEC: decryption operation, SHARE-GEN: share generation, SEC-REC: secret reconstruction, SEC-VAL: secret validation.}
	\label{fig:real_protocol}
\end{figure*}

\subsubsection{Resources}

We prove the security of the skeleton protocol assuming the availability of following resources:
\begin{enumerate}[label=(\arabic*)]
    \item Secret key resource: each user and Security Hub pair have a shared secret key resource that has only output interfaces. 
    \item Authenticated channel resource: each communication link between a user and a Security Hub is an authenticated channel, that is, for any message sent through the channel, either the original message is delivered or an error is detected if an adversary attempts to modify the message.  
    \item Security Hub resource: It has three interfaces, one for its sender, one for its receiver and one for Eve. As discussed in \Cref{subsec:ideal_system}, each Security Hub can operate in one of its two modes: honest or compromised. In the honest mode, it receives $(Y_i, T_i)$ from its sender and simply sets $\overline{Y}_i := Y_i$ and $\overline{T}_i := T_i$ to give to its receiver; it ignores inputs from the $E$-interface and sets $T^E_i:=T_i$ and $Y^E_i:=\perp$.  In the compromised mode, it receives $\overline{Y}^E_i$ and $\overline{T}^E_i$ from the $E$-interface, and sets $\overline{Y}_i :=\overline{Y}^E_i$ and $\overline{T}_i :=\overline{T}^E_i$, which are sent to its receiver; it outputs $Y_i$ and $T_i$ received from its sender to the $E$-interface, i.e., $Y^E_i:=Y_i, T^E_i := T_i$. 
\end{enumerate}
We use $\mf{R}_s$ to denote all resources used in the real system.

From a secret key resource and an authenticated channel resource we can construct a secure channel that either transmits the input message confidentially and correctly or produces the $\perp$ symbol to indicate an error when an adversary attempts to modify the message. We remark that in the DSKE protocol, $Y_i$ and $\overline{Y}_i$ are encrypted using the secret key resources. In \Cref{fig:real_protocol} as well as in \Cref{fig:ideal+simulator}, we use $T_i$ to denote the non-encrypted part of the message from Alice to the Hub $P_i$, and similarly $\overline{T}_i$ to denote the non-encrypted part of the message from the Hub $P_i$ to Bob. Then the encrypted version of $Y_i$ (similarly $\overline{Y}_i$) is transmitted together with $T_i$ (correspondingly $\overline{T}_i)$ in one message through an authenticated channel. 

We remark that the secret key resources are available after the one-time setup process in the DSKE protocol. However, the authenticated channel resource is not available in the general DSKE protocol. Thus, after proving the security of the skeleton protocol, we then prove the security of the general DSKE protocol by constructing an authenticated channel resource using a secret key resource and an insecure channel.

\subsubsection{Converters}

We need converters that use secret key resources and authenticated channels and $n$ Security Hub resources to (approximately) construct the ideal system. We now define converters for the DSKE protocol.

As depicted in \Cref{fig:real_protocol}, Alice has a converter $\pi^A$ that produces $S^A$, the shares $Y_i$ and a secret-authenticating tag $o^A$ for validation of the secret. Alice communicates with the Hub $P_i$ through an authenticated channel by sending the encrypted version of $Y_i$ along with $T_i$ where she sets $T_i=P_i \parallel A\parallel B \parallel o^A$, where $P_i \parallel A \parallel B$ is used for identity validation and $o^A$ is the secret-authenticating tag\footnote{Note that $T_i$ and $\overline{T}_i$ in the protocol may contain other necessary information such as the offset for the unused portion of each table and a secret identification number. We omit writing such information here for simplicity as it does not affect our discussion.}. As a result, the Hub can either receive share $Y_i$ and $T_i$ without modification or securely detect errors and get $\perp$ for $Y_i$ and $T_i$. Bob has a converter $\pi^B$ that receives inputs from authenticated channels from each Hub to Bob, and outputs $S^B$, which can take the value of $\perp$ to indicate abort of the protocol. Inside the converter $\pi^B$, it runs the secret reconstruction step and the secret validation step of the DSKE protocol.

The real system is described by a joint probability distribution $P_{S^A,S^B,(Y^E_i,T^E_i,\overline{Y}^E_i,\overline{T}^E_i, Z^E_i)_i}$.

\subsection{Simulator}

As the second condition in \cref{def:composable_security} considers the situation where Eve is active, we need to introduce a simulator $\sigma^E$ (which is Eve's converter) such that when it connects with the ideal system, the $E$-interface of $\mf{S}\sigma^E$ is the same as the $E$-interface of the real system $\pi^s\mf{R}_s$. Note that $\mc{C}$ is used to denote the set of all identifiers of compromised Security Hubs, and that $|C| \leq k-1$ under our assumptions. As Eve can fully control compromised Hubs in our threat model, the simulator $\sigma^E$ needs to have different operations for Hubs in $C$ and for Hubs in the complement set.  

We note that the $E$-interface of the ideal system in \Cref{fig:ideal_resource} accepts $(\overline{Y}^E_i, \overline{T}^E_i, Z^E_i)_i$ as inputs and produces $(Y^E_i, T^E_i)_i$ as outputs. It also gives the ability to set the operation modes of Security Hubs. In the real system shown in \Cref{fig:real_protocol}, Security Hubs whose identifiers are in the set $C$ operate in the compromised mode while the rest are in the honest mode. This means the simulator $\sigma^E$ needs to set the operation modes of those Hubs and blocks the ability to change the operation modes to match the $E$-interface of the real system. Except setting the operation modes, the $E$-interface of the simulator $\sigma^E$ is then identical to the $E$-interface of the ideal system. We depict one possible simulator $\sigma^E$ in \Cref{fig:ideal+simulator} for this purpose. 

\begin{figure*}[t]
	\includegraphics[width=\linewidth]{./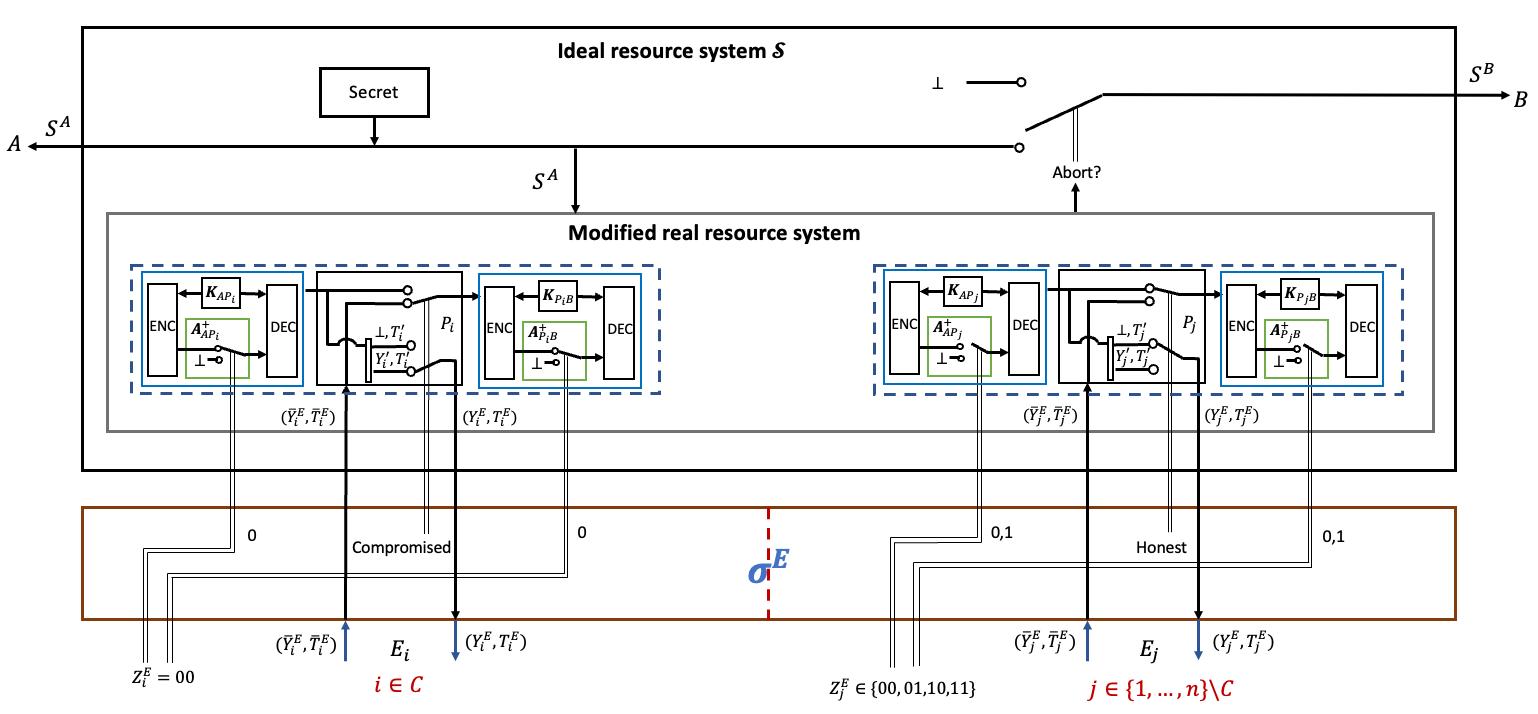}
	\caption{An ideal key agreement resource using $n$ Security Hubs with the simulator $\sigma^E$. Hubs are reordered for drawing purposes only. The set of compromised Hubs is denoted by $C$. The simulator $\sigma^E$ sets operation modes of the Security Hubs contained inside the ideal resource system. For Hubs in the set $C$, it sets the Hub $P_i$ to operate in the compromised mode and uses $Z^E_i:=00$ it receives from its $E$-interface to set the behaviours of two authenticated channels for the Hub $P_i$. For all other Hubs that are not in the set $C$, the simulator $\sigma^E$ sets the Hub $P_j$ to operate in the honest mode and uses $Z^E_j \in \{00, 01, 10, 11\}$ to set the behaviours of two authenticated channels for the Hub $P_j$. The $E$-interface of the simulator also accepts the a pair of values $(\overline{Y}^E_i,\overline{T}^E_i)$ and it outputs $(Y^E_i, T^E_i)$ received from the ideal system. The allowed alphabets for those variables are determined by the operation mode of each corresponding Hub.}
	\label{fig:ideal+simulator}
\end{figure*}

\subsection{Distinguisher}

For $3$-interface resources, a distinguisher $\mf{D}$ is a system with $4$ interfaces, where $3$ interfaces connect to the interfaces of a resource $\mf{R}$ and the other interface outputs a bit, which indicates its guess about which resource is given. This is illustrated in \Cref{fig:distinguisher}. In the DSKE protocol setup, the distinguisher has the following abilities to interact with the unknown system:

\begin{enumerate}[label=(\roman*)]

    \item The distinguisher $\mf{D}$ can read outputs from the $A$- and $B$-interfaces. These two interfaces do not accept any input. The distinguisher can set the inputs of and read outputs from the $E$-interface unless the $E$-interface has no input/output when a simulator is used to block input/output.

    \item When the $E$-interface allows the control of internal communication links, the distinguisher $\mf{D}$ can attack internal communication as allowed by Eve's ability since the distinguisher can act as Eve through the $E$-interface.  

    \item When the $E$-interface allows the control of compromised Security Hubs, the distinguisher $\mf{D}$ can control those compromised Security Hubs.

\end{enumerate}

As the real system is completely characterized by $P_{S^A,S^B,(Y^E_i,T^E_i,\overline{Y}^E_i,\overline{T}^E_i, Z^E_i)_i}$ and the ideal system with the simulator $\sigma^E$ is completely characterized by $Q_{S^A,S^B,(Y^E_i,T^E_i,\overline{Y}^E_i,\overline{T}^E_i, Z^E_i)_i}$, the distinguishing advantage (pseudo-metric) defined in \cref{eq:distinguishing_advantage} can be related to the statistical distance between $P_{S^A,S^B,(Y^E_i,T^E_i,\overline{Y}^E_i,\overline{T}^E_i, Z^E_i)_i}$ and $Q_{S^A,S^B,(Y^E_i,T^E_i,\overline{Y}^E_i,\overline{T}^E_i, Z^E_i)_i}$ as we will see in the security proof.

\begin{figure}[ht]
    \includegraphics[width=\linewidth]{./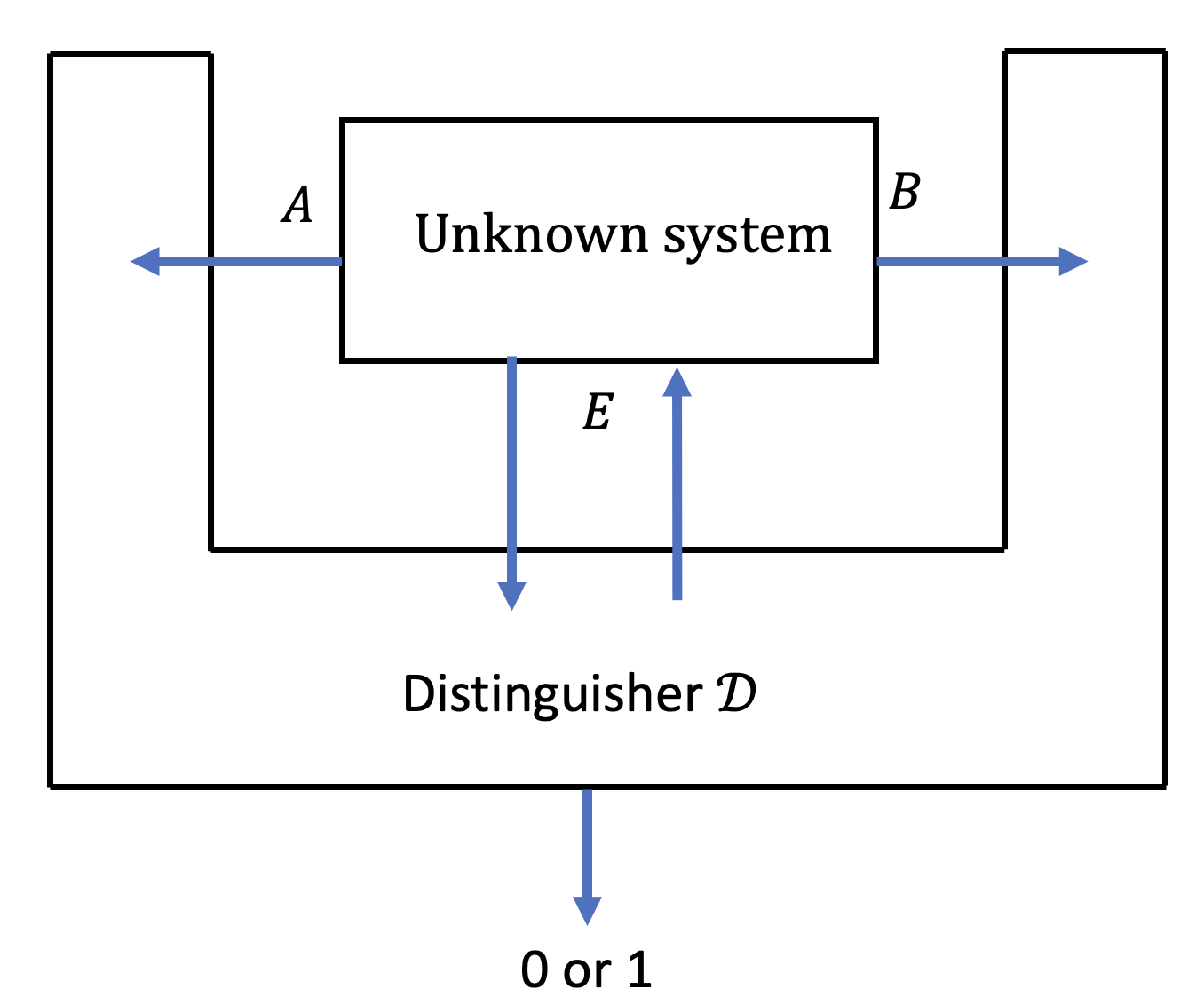}
    \caption{\label{fig:distinguisher}The distinguisher $\mf{D}$ connects to an unknown system, interacts with it and reproduces a one-bit output that indicates its guess about the identity of the unknown system.}
\end{figure}

\subsection{Security}

\begin{theorem}\label{thm:skeleton_security}
	Under the assumptions listed in \Cref{sec:assumptions}, the protocol $\pi^s = (\pi^A, \pi^B)$ described above (depicted in \Cref{fig:real_protocol}) is $\epsilon$-secure, where $\epsilon=\min({n \choose k}\frac{m+1}{|F|},1)$, which is determined by the family of hash functions used in the protocol as described in \cref{thm:secret_validation} as well as the choices of $n$ and $k$ in the $(n, k)$-threshold scheme. 
\end{theorem}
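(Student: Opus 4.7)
The plan is to verify the two conditions of \Cref{def:composable_security}. For condition (i), with $\alpha^E$ and $\gamma^E$ emulating honest behaviour at the $E$-interface, every authenticated channel in the real system transmits its payload faithfully, every Hub acts on its honest branch, Bob reconstructs $S^B = S^A$ via Shamir reconstruction, and validation passes deterministically; the output is $(S^A,S^A)$ with $S^A$ uniform in $F^m$. The ideal system with $\gamma^E$ produces the same output distribution, so the distinguishing advantage is $0 \leq \epsilon$, and the remaining work concentrates on condition (ii).

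For condition (ii), I would fix the simulator $\sigma^E$ shown in \Cref{fig:ideal+simulator}: it hard-wires the operation modes of the $n$ Hubs inside the ideal resource (compromised for $i\in\mathcal{C}$, honest otherwise) and then transparently relays $(\overline{Y}^E_i,\overline{T}^E_i,Z^E_i)_i$ into, and $(Y^E_i,T^E_i)_i$ out of, the corresponding Hub-and-channel blocks of the ideal resource's internal modified real subsystem. The key observation is that this internal subsystem runs exactly the same share-generation, encryption, Hub-forwarding, reconstruction, and validation procedures as the real system $\pi^s\mf{R}_s$, driven by the same uniform $S^A$ (supplied by the ideal Secret resource). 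Hence every coordinate of the joint distributions $P_{S^A,S^B,(Y^E_i,T^E_i,\overline{Y}^E_i,\overline{T}^E_i,Z^E_i)_i}$ (real) and $Q_{S^A,S^B,(Y^E_i,T^E_i,\overline{Y}^E_i,\overline{T}^E_i,Z^E_i)_i}$ (ideal-plus-simulator) has an identical conditional distribution given the distinguisher's inputs, with one possible exception: the ideal system forces $S^B \in \{S^A,\perp\}$ by construction, whereas the real system may output some $s' \notin \{S^A,\perp\}$ when a forged candidate passes validation. Letting $\mathcal{E} = \{S^B \notin \{S^A,\perp\}\}$ in the real system, \Cref{eq:statistical_distance_property} and the definition of distinguishing advantage then give $d(\pi^s\mf{R}_s,\mf{S}\sigma^E) \leq \Pr[\mathcal{E}]$.

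To bound $\Pr[\mathcal{E}]$, I would enumerate the at most ${n \choose k}$ subsets of $k$ shares from which Bob can try to reconstruct a candidate $(u',S')$. For each such subset, the shares Bob uses differ from the shares Alice generated by a deterministic additive perturbation chosen adversarially through the compromised Hubs' outputs and the $Z^E_i$ flags, and the asserted tag is similarly a perturbation of $o^A$. The event that this reconstructed $(u',S')$ passes validation against the perturbed tag is exactly the event bounded by \Cref{thm:secret_validation}, instantiating its $(C,D,E)$ with the three uniform confidential field elements of $u^A$ and $\mbf{Y}$ with $S^A$; so it has probability at most $\frac{m+1}{|F|}$. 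A union bound over the ${n \choose k}$ subsets, capped by $1$, then yields $\Pr[\mathcal{E}] \leq \min({n\choose k}\frac{m+1}{|F|},1) = \epsilon$. The main obstacle will be verifying the premise of \Cref{thm:secret_validation}, namely that $(C,D,E)$ are uniform and independent of Eve's view so that her perturbation is chosen blind to them: this uses $|\mathcal{C}|<k$ together with \Cref{thm:secret_confidentiality} (any $k-1$ shares leak nothing about the shared vector $u^A\parallel S^A$) and \Cref{thm:hash_function_confidentiality} (publishing the tag $o^A$ leaks no further information about $u^A$ or $S^A$), so that from Eve's perspective the perturbation and the uniform $(C,D,E)$ are statistically independent and \Cref{thm:secret_validation} applies share-subset by share-subset.
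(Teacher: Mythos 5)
Your treatment of condition (ii) is essentially the paper's own argument: the same simulator from \Cref{fig:ideal+simulator}, the observation that the ideal resource internally runs the same protocol so the real and ideal-plus-simulator distributions can differ only on the event $S^B\notin\{S^A,\perp\}$, the reduction of the distinguishing advantage to the statistical distance (hence to the probability of that event), and the union bound over the ${n\choose k}$ candidate reconstructions with \Cref{thm:secret_validation} applied to each, its premise secured by \Cref{thm:secret_confidentiality} and \Cref{thm:hash_function_confidentiality} together with $|\mathcal{C}|<k$. The one point where you diverge is condition (i): you assert the advantage is exactly $0$ on the grounds that ``every Hub acts on its honest branch.'' In the paper's reading, the operation modes of the Hubs are predetermined properties of the resource, and the converters $\alpha^E,\gamma^E$ only block the distinguisher's access to the $E$-interface while still driving the compromised Hubs in some predefined (possibly adversarial) way; consequently the paper must re-run essentially the same $\epsilon$-bound argument for condition (i) rather than concluding perfect agreement. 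Your claim of $0$ holds only in the sub-case where that predefined behaviour is honest relaying, so as written your condition-(i) step does not cover the full case the paper intends; since $\epsilon$ bounds both readings, the theorem's conclusion is unaffected, but you should either justify why honest emulation at the $E$-interface forces the compromised Hubs to relay faithfully, or simply reuse your condition-(ii) estimate (which applies verbatim with $Z^E_i=00$ for all $i$) to bound condition (i) by $\epsilon$ as the paper does. One further small remark: the paper also inserts an explicit step showing that when $S^B\neq\perp$ the tag $o^A$ reaching Bob is the authentic one (because at least $k$ Hubs must agree on it and at most $k-1$ are compromised), which lets it set $t'=0$ in \Cref{thm:secret_validation}; your version keeps the perturbed tag $t+t'$, which is still covered by that theorem, so this is a stylistic rather than substantive difference.
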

\begin{proof}
	We need to check two conditions in \cref{def:composable_security}.

	To check the first condition, we use converters $\alpha^E$ and $\gamma^E$ to plug into the $E$-interface in the real and ideal systems. They both allow all messages $Y_i, T_i$ and $\overline{Y}_i$, $\overline{T}_i$ to be delivered correctly in authenticated channels (that is, $Z^E_i = 00$ for all $i$) and completely block $E$-interface from the distinguisher. Those two converters may still control compromised Hubs in the same predefined way with the restriction that the number of compromised Hubs is at most $k-1$. In this case, the distinguisher can observe only outputs from $A$- and $B$-interfaces.  

    We use $X$ to denote $S^A,S^B$ and use $\mathcal{X}$ to denote the set of values that $S^A,S^B$ can take. As any distinguisher can only observe $S^A$ and $S^B$, from the distinguisher's point of view, the real system is completely described by $P_{X}$ and the ideal system with the simulator $\sigma^E$ by $Q_{X}$. To guess whether it is holding the real system, a deterministic strategy for the distinguisher is that the distinguisher can pick a subset $\mathcal{X}' \subseteq \mathcal{X}$ such that for all $x \in \mathcal{X}'$, it outputs $1$ and for all other values of $x$, it outputs 0. The distinguisher may also choose a mixed (probabilistic) strategy. We note that each mixed strategy is just a probabilistic mixture of pure (deterministic) strategies. Let $\mf{D}'$ be a distinguisher that uses an arbitrary mixed strategy which is a probabilistic mixture of deterministic strategies $\mathcal{X}'_k$ with corresponding probabilities $p_k$. The distinguishing advantage of this strategy is thus bounded by%
    \begin{aeq}
        &\Pr[\mf{D}'\pi^s\mf{R}_s\alpha^E=1]-\Pr[\mf{D}'\mf{S}\gamma^E=1] \\
        =~&\sum_k p_k \sum_{x \in \mathcal{X}'_k} (P_X(x) - Q_X(x))\\
        \leq~& \sum_k p_k \Big[\max_{\mathcal{X}':\mathcal{X}'\subseteq \mathcal{X}}\sum_{x \in \mathcal{X}'} (P_X(x) - Q_X(x))\Big]\\
        =~&\max_{\mathcal{X}':\mathcal{X}'\subseteq \mathcal{X}}\sum_{x \in \mathcal{X}'} (P_X(x) - Q_X(x)),
    \end{aeq}%
    where the first equality is due to the chosen strategy of the distinguisher $\mf{D}'$, which can be written as a probabilistic mixture of deterministic strategies, the inequality is due to the fact that we perform an optimization over all subsets $\mathcal{X}'$ of $\mathcal{X}$ and $\mathcal{X}'_k$ is just a possible subset of $\mathcal{X}$, and the last equality is the result of summing over $k$. Thus, it is enough to consider all deterministic strategies. 

    In this case, the distinguishing advantage is%
 	\begin{aeq}\label{eq:ske_metric_calculation}
		&d(\pi^s\mf{R}_s\alpha^E,\mf{S}\gamma^E)  \\
        =~& \max_{\mf{D} \in \mathbb{D}}(\Pr[\mf{D}\pi^s\mf{R}_s\alpha^E=1]-\Pr[\mf{D}\mf{S}\gamma^E=1]) \\
        =~& \max_{\mathcal{X}':\mathcal{X}'\subseteq \mathcal{X}}\sum_{x \in \mathcal{X}'} (P_X(x) - Q_X(x)) \\
        =~& \frac{1}{2}\sum_{x \in \mathcal{X}}  |P_{X}(x) - Q_{X}(x)|, \\
    \end{aeq}%
    where the first equality is due to the definition of the pseudo-metric in \cref{eq:distinguishing_advantage}, the second equality is due to the general deterministic strategy of the distinguisher as described above and that the optimal value of the distinguishing advantage can always be realized by a deterministic strategy, and the third equality is due to \cref{eq:statistical_distance_property}. We see that the distinguishing advantage is related to the statistical distance between $P_X$ and $Q_X$. 

    Our task is then to evaluate the statistical distance. We note that the allowed alphabets for $\overline{Y}^E_i$ and $\overline{T}^E_i$ depend on whether $i \in \mc{C}$. For ease of writing, we do not explicitly write out alphabets for those variables.  
    For any value of $s^A$ and any value of $s^B$ such that $s^B \neq s^A$ and $s^B \neq \perp$, the joint probability distribution of the ideal system $Q_{X}(s^A, s^B) = 0$, while the joint probability distribution of the real system $P_{X}(s^A, s^B) \neq 0$ due to the correctness of secret validation being approximate, which depends on the property of hash functions used for the secret-authenticating tag as in \cref{thm:secret_validation}. When restricting to all possible values of $(s^A, s^B)$ such that $P_X(s^A, s^B)> Q_X(s^A, s^B)$, the real and ideal systems differ only in the case where the real system may obtain $s^B \neq s^A$ and $s^B \neq \perp$. To see this, we show the contrapositive: if $s^A=s^B$ or $s^B =\perp$, then $P_{X}(s^A, s^B) \leq Q_{X}(s^A, s^B)$. For each $s^A$, we observe (i) $P_{S^A}(s^A) = Q_{S^A}(s^A) = \frac{1}{|F|^m}$, (ii) $P_{X}(s^A, \perp) = Q_{X}(s^A, \perp)$ and (iii) for each $s^B \neq s^A$, $Q_{X}(s^A, s^B) = 0$ while $P_{X}(s^A, s^B) \geq 0$. As we can write
    \begin{aeq}
P_{S^A}(s^A) &= \sum_{s^B} P_X(s^A, s^B)\\ &=  \sum_{s^B \neq \perp} P_X(s^A, s^B)  +  P_X(s^A, \perp),   
    \end{aeq}%
    and
        \begin{aeq}
Q_{S^A}(s^A) &= \sum_{s^B} Q_X(s^A, s^B) \\&=  Q_X(s^A, s^A)  +  Q_X(s^A, \perp),
    \end{aeq}%
    these two equations imply $P_X(s^A, s^A) \leq Q_X(s^A, s^A)$ for each $s^A$ following those facts (i)-(iii).

    Thus, the distinguishing advantage is%

    \begin{flalign}
		&d(\pi^s\mf{R}_s\alpha^E,\mf{S}\gamma^E) \nonumber &\\
		=~& \sum_{x: P_X(x) \geq Q_X(x)} (P_X(x) - Q_X(x)) \label{eq:cond1_ske_calc_distance_property} &\\
		=~& \sum_{s^A} \sum_{s^B: s^B \neq s^A, s^B \neq \perp} P_{X}(s^A, s^B) \label{eq:cond1_ske_calc_P_Q_properties} &\\
		=~& \sum_{s^A, (\overline{y}_i, \overline{t}_i)_i}  \sum_{\substack{s^B: s^B \neq s^A, \\ s^B \neq \perp}}P_{S^A, S^B, (\overline{Y}^E_i, \overline{T}^E_i)_i}(s^A, s^B, (\overline{y}_i, \overline{t}_i)_i) \label{eq:cond1_ske_cal_sum_over_yt} &\\
		=~& \sum_{s_A, (\overline{y}_i, \overline{t}_i)_i} P_{S^A, (\overline{Y}^E_i, \overline{T}^E_i)_i}(s^A, (\overline{y}_i, \overline{t}_i)_i) \nonumber &\\
		& \times   \sum_{ \substack{s^B: s^B \neq s^A,\\ s^B \neq \perp}} P_{S^B|S^A, (\overline{Y}^E_i, \overline{T}^E_i)_i}(s^B | s^A, (\overline{y}_i, \overline{t}_i)_i) \label{eq:cond1_ske_cal_convert_to_conditional_prob}, &
	\end{flalign}%
    where we use \cref{eq:statistical_distance_property} to obtain \cref{eq:cond1_ske_calc_distance_property} from \cref{eq:ske_metric_calculation}, we obtain \cref{eq:cond1_ske_calc_P_Q_properties} since the condition $P(x)\geq Q(x)$ is equivalent to the situation where Bob does not abort the protocol and Bob obtains $S^B$ that is different from $S^A$ as discussed above (also note $Q_X(x) = 0$ for such $x$), we write the marginal probability $P_{S^A, S^B}$ in terms of summing $P_{S^A, S^B, (\overline{Y}^E_i, \overline{T}^E_i)_i}$ over all possible values of $(\overline{Y}^E_i, \overline{T}^E_i)_i$ to get \cref{eq:cond1_ske_cal_sum_over_yt}, and we finally rewrite the joint probability over $S^A, S^B, (\overline{Y}^E_i, \overline{T}^E_i)_i$ by the conditional probability $P_{S^B | S^A, (\overline{Y}^E_i, \overline{T}^E_i)_i}$ and the marginal probability $P_{S^A, (\overline{Y}^E_i, \overline{T}^E_i)_i}$ to obtain \cref{eq:cond1_ske_cal_convert_to_conditional_prob}.

    Under the assumption that at most $k-1$ Hubs are compromised, we can safely assume whenever $S^B \neq \perp$, the secret-authenticating tag $o^A$ is transmitted faithfully from at least $k$ Hubs to Bob. The reason is that Bob would set $S^B$ to $\perp$ and abort the protocol if fewer than $k$ Hubs send the same secret-authenticating tag; on the other hand, to agree on a value $t \neq \perp$ other than $o^A$ for the secret-authenticating tag, at least $k$ Hubs need to send the same modified value $t$, which is not possible given that at most $k-1$ Hubs are compromised. Thus, for each value of $s^A, (\overline{y}_i, \overline{t}_i)_i$,%
    \begin{aeq}\label{eq:con_prob_epsilon_estimation}
        & \sum_{ \substack{s^B: s^B \neq s^A, \\ 
        s^B \neq \perp}} P_{S^B|S^A, (\overline{Y}^E_i, \overline{T}^E_i)_i}(s^B | s^A, (\overline{y}_i, \overline{t}_i)_i) \\
    	\leq & \min({n \choose k}\Pr[h_{u'}(s^B) = o^A|s^A, (\overline{y}_i, \overline{t}_i)_i, s^B \neq s^A], 1) \\
    	\leq &  \min({n \choose k}\frac{m+1}{|F|},1)=:\epsilon,
    \end{aeq}%
    where $u' \parallel s^B$ is the secret reconstructed from a subset of $\{\overline{y}_i\}$ with $k$ elements, the factor ${n \choose k}$ is due to there being up to {${n \choose k}$} possible values of $u'\parallel s^B$ that the secret validation step can check, and we use \cref{thm:secret_validation} (where we remove the freedom to choose $t'$ and set $t'=0$) for the last inequality.%
    \footnote{The multiplier ${n \choose k}$ may significantly reduce the security for large $n$.  For example, $n=99$, $k=50$ results is a security loss of $\log {99\choose 50}=95.35\text{ bits}$.  For $n\le11$, the security loss is under 9 bits, but with Bob still having to search up to ${11 \choose 6}=462$ combinations.  A protocol variant, which we do not elaborate, allows Bob to filter out bad shares before combining them, at the cost of several tags per share.}

    Combining \cref{eq:cond1_ske_cal_convert_to_conditional_prob,eq:con_prob_epsilon_estimation}, we have%
    \begin{aeq}
    	&d(\pi^s\mf{R}_s\alpha^E,\mf{S}\gamma^E) \\
    	\leq & \sum_{s_A, (\overline{y}_i, \overline{t}_i)_i} P_{S^A, (\overline{Y}^E_i, \overline{T}^E_i)_i}(s^A, (\overline{y}_i, \overline{t}_i)_i) \epsilon
    	= \epsilon.
    \end{aeq}%
    This verifies the first condition in \cref{def:composable_security}.

    To check the second condition, we consider the simulator $\sigma^E$ as depicted in \Cref{fig:ideal+simulator}. In this case, Eve's interface allows the distinguisher to control all compromised Hubs. A distinguisher can pick any allowed value of $(\overline{Y}^E_i, \overline{T}^E_i, Z^E_i)_i$ as an input to the system and observe $S^A, S^B, (Y^E_i, T^E_i)_i$ from outputs of the unknown system. The restriction on the number of compromised Hubs is reflected by different alphabets of $\overline{Y}^E_i$, $\overline{T}^E_i$ and $Z^E_i$ for $i \in \mc{C}$ and for $i \in \{1, \dots, n\}\setminus\mc{C}$. The real system is completely characterized by $P_{S^A,S^B,(Y^E_i,T^E_i,\overline{Y}^E_i,\overline{T}^E_i, Z^E_i)_i}$ and the ideal system with the simulator $\sigma^E$ is completely characterized by $Q_{S^A,S^B,(Y^E_i,T^E_i,\overline{Y}^E_i,\overline{T}^E_i, Z^E_i)_i}$. We now use $X$ to denote $S^A,S^B,(Y^E_i,T^E_i,\overline{Y}^E_i,\overline{T}^E_i, Z^E_i)_i$ and let $\mathcal{X}$ denote the set of values that $S^A,S^B,(Y^E_i,T^E_i,\overline{Y}^E_i,\overline{T}^E_i, Z^E_i)_i$ can take.

    We note that the ideal system and the real system abort under the same condition since the ideal system internally runs the DSKE protocol to determine abort conditions. Therefore, for any allowed value $s^A, (y_i, t_i, \overline{y}_i, \overline{t}_i, z_i)_i$ for $(Y^E_i,T^E_i,\overline{Y}^E_i,\overline{T}^E_i, Z^E_i)_i$,%
    \begin{aeq}
    	P_{X}(s^A, \perp, (y_i, t_i, \overline{y}_i, \overline{t}_i, z_i)_i) = Q_{X}(s^A, \perp, (y_i, t_i, \overline{y}_i, \overline{t}_i, z_i)_i).
    \end{aeq}%

    By a similar argument as for the first condition, it is enough to consider all deterministic strategies when we calculate the pseudo-metric for these two systems. To guess that it is holding the real system, the distinguisher can pick a subset $\mathcal{X}' \subseteq \mathcal{X}$ such that for all $x \in \mathcal{X}'$, it outputs $1$ and for all other values of $x$, it outputs 0. (Note that we reuse the same notation as in the proof of the first condition since the proof of the second condition resembles the first one. However, $X, \mathcal{X}$ here denote a different combination of symbols and a different set, respectively.) The distinguishing advantage is
    \begin{flalign}
		&  d(\pi^s\mf{R}_s,\mf{S}\sigma^E) \nonumber &\\
		= & \sum_{x: P_X(x) \geq Q_X(x)} (P_X(x) - Q_X(x)) \label{eq:ske_calc_distance_property} &\\
		= & \sum_{\substack{s^A, \\ (y_i, t_i, \overline{y}_i, \overline{t}_i, z_i)_i}} \sum_{ \substack{s^B: s^B \neq s^A,\\ s^B \neq \perp}} P_{X}(s^A, s^B, (y_i, t_i, \overline{y}_i, \overline{t}_i, z_i)_i) \label{eq:ske_calc_P_Q_properties} &\\
		= & \sum_{s^A, (\overline{y}_i, \overline{t}_i, z_i)_i}  \sum_{\substack{s^B: s^B \neq s^A, \\ s^B \neq \perp}}P_{S^A, S^B, (\overline{Y}^E_i, \overline{T}^E_i, Z^E_i)_i}(s^A, s^B, (\overline{y}_i, \overline{t}_i, z_i)_i) \label{eq:ske_cal_sum_over_yt}&\\
		=& \sum_{s_A, (\overline{y}_i, \overline{t}_i,z_i)_i} P_{S^A, (\overline{Y}^E_i, \overline{T}^E_i, Z^E_i)_i}(s^A, (\overline{y}_i, \overline{t}_i)_i, z_i) \nonumber &\\
		& \times   \sum_{ \substack{s^B: s^B \neq s^A,\\ s^B \neq \perp}} P_{S^B|S^A, (\overline{Y}^E_i, \overline{T}^E_i, Z^E_i)_i}(s^B | s^A, (\overline{y}_i, \overline{t}_i, z_i)_i) \label{eq:ske_cal_convert_to_conditional_prob}, &
	\end{flalign}%
    where we obtain \cref{eq:ske_calc_P_Q_properties} since the condition $P(x)\geq Q(x)$ is equivalent to the situation where Bob does not abort the protocol and Bob obtains $S^B$ that is different from $S^A$ as discussed above (also note $Q_X(x) = 0$ for those $x$'s), to get \cref{eq:ske_cal_sum_over_yt}, we directly sum over all possible values of $(Y^E_i, T^E_i)_i$ since those are outputs from $E$-interface that are not used for the secret validation step, and we finally rewrite the joint probability over $S^A, S^B, (\overline{Y}^E_i, \overline{T}^E_i, Z^E_i)_i$ by the conditional probability $P_{S^B | S^A, (\overline{Y}^E_i, \overline{T}^E_i, Z^E_i)_i}$ and the marginal probability $P_{S^A, (\overline{Y}^E_i, \overline{T}^E_i, Z^E_i)_i}$ to obtain \cref{eq:ske_cal_convert_to_conditional_prob}. 

    We note that \cref{eq:con_prob_epsilon_estimation} also holds when we condition on the additional input choices $(Z^E_i)_i$ that determine the behaviours of authenticated channels since \cref{eq:con_prob_epsilon_estimation} was proved under the choice $Z^E_i =00$ for all $i$ and we considered all possible combinations of $u' \parallel s^B$ that can go through to the secret validation step when we estimated that upper bound. By setting any of $(Z^E_i)_i$ to any value other than $00$, Eve effectively reduces the number of combinations of $u' \parallel s^B$ to feed into the secret validation step while the probability for each combination of $u' \parallel s^B$ to pass the secret validation step is unchanged. In other words, for each allowed value $s^A, (\overline{y}_i, \overline{t}_i, z_i)_i$ of $S^A, (\overline{Y}^E_i, \overline{T}^E_i, Z^E_i)_i$,%
    \begin{aeq}\label{eq:cond2_cond_prob_estimation}
        & \sum_{ \substack{s^B: s^B \neq s^A, \\ 
        s^B \neq \perp}}P_{S^B|S^A, (\overline{Y}^E_i, \overline{T}^E_i, Z^E_i)_i}(s^B | s^A, (\overline{y}_i, \overline{t}_i, z_i)_i) \leq \epsilon.
    \end{aeq}%

    By using \cref{eq:cond2_cond_prob_estimation}, the distinguishing advantage is then%
    \begin{aeq}\label{eq:skeleton_secrecy_metric_calculation}
    	&d(\pi^s\mf{R}_s,\mf{S}\sigma^E) \\
    	=& \sum_{s_A, (\overline{y}_i, \overline{t}_i, z_i)_i} P_{S^A, (\overline{Y}^E_i, \overline{T}^E_i, Z^E_i)_i}(s^A, (\overline{y}_i, \overline{t}_i, z_i)_i) \\
    	& \times   \sum_{ \substack{s^B: s^B \neq s^A,\\ s^B \neq \perp}} P_{S^B|S^A, (\overline{Y}^E_i, \overline{T}^E_i, Z^E_i)_i}(s^B | s^A, (\overline{y}_i, \overline{t}_i, z_i)_i) \\
    	\leq & \sum_{s_A, (\overline{y}_i, \overline{t}_i, z_i)_i} P_{S^A, (\overline{Y}^E_i, \overline{T}^E_i, Z^E_i)_i}(s^A, (\overline{y}_i, \overline{t}_i, z_i)_i) \epsilon
        = \epsilon.
    \end{aeq}%
	
	Therefore, the second condition is also verified. Thus, the protocol is $\epsilon$-secure.
\end{proof}
\begin{remark}
    The secret validation step relies on the correctness of the secret-authenticating tag as stated in \cref{thm:secret_validation}. In the limit that $|F|\to\infty$, we have that $\epsilon\to0$, and the secret validation step is perfectly correct. In this case, the skeleton DSKE protocol perfectly constructs the ideal system out of the real resource system. 
\end{remark}

\subsection{Robustness}\label{sec:skeleton_robustness}

Robustness of a protocol is the condition that the protocol does not abort when Eve is passive (restricted to modifying the messages of compromised Hubs only), as defined in \Cref{sec:threat_model}.  We say a protocol is $\epsilon$-robust if its aborting probability is at most $\epsilon$ in the case of this restriction. 
When restricted in this way, the aborting probability for the skeleton DSKE protocol is at most $\epsilon=\min({n \choose k}\frac{m+1}{|F|},1)$ when the number of honest Security Hubs is at least $k$ and the number of compromised Security Hubs is at most $k-1$.\footnote{The second condition becomes significant when $2k\le n$, due to multiple competing reconstructed secrets.}
\begin{theorem}[Robustness of a skeleton DSKE protocol]\label{thm:robustness_skeleton}
	Under the assumptions listed in \Cref{sec:assumptions}, when the upper bound on the number of compromised Security Hubs is no greater than $\min(n-k, k-1)$, the skeleton DSKE protocol is $\epsilon$-robust with $\epsilon = \min({n \choose k} \frac{m+1}{|F|}, 1)$.
\end{theorem}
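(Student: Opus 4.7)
The plan is to decompose the event ``Bob aborts'' into its two possible causes and bound each one. Under the hypothesis $|\mathcal{C}|\le\min(n-k,k-1)$, the number of honest Hubs is at least $n-(k-1)\ge k$ and also at least $k$ by the first bound. Since Eve is passive on the communication links, every honest Hub delivers its input $(Y_i,T_i)$ to Bob unchanged; only the up to $k-1$ compromised Hubs can deliver values that differ from Alice's. Under the passive-Eve restriction the honest-to-Alice channels also deliver the original $(Y_i,T_i)$, so every honest Hub forwards the correct share $Y_i$ and the correct tag $o^{A}$.

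First I would dispose of the ``no agreed tag'' abort cause. Bob's secret validation step fixes the tag as the unique value sent by at least $k$ Hubs. Because at least $k$ honest Hubs forward the authentic $o^{A}$, that value is always a candidate; because at most $k-1$ Hubs are compromised, no alternative value $t\neq o^{A}$ can be supported by $k$ Hubs. Hence Bob deterministically identifies $o^{A}$ as the tag against which every candidate secret is checked, and this abort mode cannot trigger.

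Next I would handle the ``no unique validated candidate'' abort cause, which is the real content of the theorem. The honest-only subset of $k$ shares (which exists since $n-|\mathcal{C}|\ge k$) reconstructs exactly $(u^{A},S^{A})$ and therefore passes validation against $o^{A}=h'_{u^{A}}(S^{A})$, so at least one valid candidate always exists. Abort therefore requires a \emph{second} candidate to pass validation. For any fixed subset $T$ of $k$ shares that contains at least one compromised share actually modified by Eve, Theorem~\ref{thm:secret_linearity} implies that the reconstruction from $T$ equals $(u^{A},S^{A})+(\Delta^{T}_{u},\Delta^{T}_{s})$, where $(\Delta^{T}_{u},\Delta^{T}_{s})$ is a fixed nonzero additive offset determined by the Hub modifications (and independent of $u^{A}$ and $S^{A}$, because Eve sees only encryptions of the shares). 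Theorem~\ref{thm:secret_validation} then bounds the probability that this altered candidate satisfies $h'_{u^{A}+\Delta^{T}_{u}}(S^{A}+\Delta^{T}_{s})=o^{A}$ by $(m+1)/|F|$, uniformly over Eve's choice of modification. A union bound over the at most $\binom{n}{k}$ subsets of size $k$ gives a total abort probability at most $\binom{n}{k}(m+1)/|F|$, which is trivially capped at $1$, yielding the claimed $\epsilon=\min\bigl(\binom{n}{k}(m+1)/|F|,\,1\bigr)$.

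The main obstacle is the last step: one must argue carefully that the application of Theorem~\ref{thm:secret_validation} is legitimate here. The theorem requires the alteration to be independent of the uniform parameters $(C,D,E)=u^{A}$, and the offsets $(\Delta^{T}_{u},\Delta^{T}_{s})$ are chosen by Eve based on her view. That view, under passive Eve, consists of the encrypted shares (which are uniform and independent of $u^{A}$, $S^{A}$ by the one-time-pad-like use of the PSRD) together with $o^{A}$. Showing that conditioning on this view leaves $u^{A}$ uniform and independent of $(\Delta^{T}_{u},\Delta^{T}_{s})$ is what enables the bound; this is the same independence structure already exploited in the proof of Theorem~\ref{thm:skeleton_security}, so the argument can be imported almost verbatim.
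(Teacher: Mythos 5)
Your proposal is correct and follows essentially the same route as the paper's proof: under passive Eve with at least $k$ honest Hubs, the authentic candidate always survives, abort is reduced to a second distinct candidate passing validation, \Cref{thm:secret_validation} bounds each wrong candidate's success probability by $(m+1)/|F|$, and a union bound over the at most $\binom{n}{k}$ subsets gives $\epsilon$. Your treatment is somewhat more explicit than the paper's (the tag-agreement step and the independence of Eve's additive offsets from $u^A$ are handled implicitly there), but the decomposition and the key lemmas are the same.
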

\begin{proof}
	When Eve is passive, all authenticated channels in the skeleton DSKE protocol faithfully transmit messages. We note that a message using a fake identity gets rejected by the receiver without consuming any secret key resources by the assumption that a communication link provides the originating identity to the receiver. In other words, Eve cannot impersonate an honest Security Hub by using compromised Hubs to mount an attack to exhaust data in a table shared by a Security Hub and a user.
 
    If the number of honest Security Hubs is at least $k$ (which means the number of compromised Hubs cannot be more than $n-k$) and the number of compromised Hubs is at most $k-1$, the only condition that causes Bob to abort is when there are multiple different candidate tuples $(u^A, s^A, o^A)$ that pass the secret validation step. That is, in addition to the correct secret reconstructed using shares from $k$ honest Security Hubs, there must exist at least one different secret candidate that also passes the secret validation step. For each possible guess of the $(u^A, s^A)$ other than the correct secret, it can pass the secret validation step with a probability at most $\min(\frac{m+1}{|F|},1)$ due to \cref{thm:secret_validation}. Since there are at most ${n \choose k}$ valid candidates to go through the secret validation step, the aborting probability in this case is at most $\min({n \choose k}\frac{m+1}{|F|},1)=: \epsilon$. Thus, the skeleton DSKE protocol is $\epsilon$-robust when the upper bound on the number of compromised Security Hubs is no greater than $\min(n-k, k-1)$.
\end{proof}

\FloatBarrier

\section{Security of the general DSKE protocol}\label{sec:proof}

To assist our discussion here, as shown in \Cref{fig:real_protocol}, we denote the secure key resource between Alice and the Hub $P_i$ as $\mbf{K}_{AP_i}$, the authenticated channel between Alice and the Hub $P_i$ as $\mbf{A}^{+}_{AP_i}$, the secure key resource between the Hub $P_i$ as $\mbf{K}_{P_iB}$, and the authenticated channel between the Hub $P_i$ and Bob as $\mathbf{A}^{+}_{P_iB}$. In the previous section, we have shown that $\pi^s$ securely constructs the ideal resource $\mf{S}$ out of $\mf{R}_s$ within $\epsilon$, where the real resource $\mf{R}_s$ in the skeleton DSKE protocol is defined as%
\begin{aeq}
	\mf{R}_s := & \mbf{K}_{AP_1} \parallel \mbf{A}^{+}_{AP_1} \parallel P_1 \parallel \mbf{K}_{P_1B} \parallel \mbf{A}^{+}_{P_1B}\parallel \dots \\
    & \parallel \mbf{K}_{AP_n} \parallel \mbf{A}^{+}_{AP_n} \parallel P_n \parallel\mbf{K}_{P_nB} \parallel \mbf{A}^{+}_{P_nB}.  
\end{aeq}%

The difference between a skeleton protocol and a general protocol using the same $(n, k)$-threshold scheme is the availability of authenticated channel resources. In a general DSKE protocol, we need to use a secure key resource and an insecure channel to construct an authenticated channel. An authentication protocol $\pi^{\auth}_{AP_i}$ (also $\pi^{\auth}_{P_iB}$) using a Carter--Wegman universal hash function family to produce message tags can securely construct an authenticated channel out of a secret key resource and an insecure channel within $\epsilon':=\min(\frac{s}{|F|},1)$ as shown in \Cref{thm:hash_function_correctness}. To distinguish this secret key resource from the secret key resource used in the skeleton protocol, we denote the new secret key resource by $\mbf{K}^{\auth}$ and the insecure channel by $\mbf{C}$ (with suitable subscripts). We then define the resource $\mf{R}_g$ used in the general DSKE protocol as%
\begin{aeq}
	\mf{R}_g :=& \; \mbf{K}_{AP_1} \parallel \mbf{K}^{\auth}_{AP_1} \parallel \mbf{C}_{AP_1} \parallel P_1 \\ & \parallel
    \mbf{K}_{P_1B} \parallel \mbf{K}^{\auth}_{P_1B} \parallel \mbf{C}_{P_1B}  \parallel \dots \\ & \parallel \mbf{K}_{AP_n} \parallel \mbf{K}^{\auth}_{AP_n} \parallel \mbf{C}_{AP_n} \parallel P_n \\ &\parallel \mbf{K}_{P_nB} \parallel \mbf{K}^{\auth}_{P_nB} \parallel \mbf{C}_{P_nB} . 
\end{aeq}%

Recall that $m$ is the length of the secret to be agreed and $s$ is the length of the message for which a tag is produced. We now show that the general DSKE protocol is $(\epsilon + 2n \epsilon')$-secure, where $\epsilon = \min({n \choose k}\frac{m+1}{|F|}, 1)$ and $\epsilon' = \min(\frac{s}{|F|},1)$.

\begin{theorem}
	The protocol $\pi^g = (\pi^s, \pi^{\auth}_{AP_1}, \dots, \pi^{\auth}_{AP_n}, \pi^{\auth}_{P_1B}, \dots \pi^{\auth}_{P_nB})$ securely constructs the ideal resource $\mf{S}$ out of the resource $\mf{R}_g$ within $\epsilon + 2n \epsilon'$, where each of $\pi^{\auth}_{AP_i}$ and $\pi^{\auth}_{P_iB}$ securely construct authenticated channels $\mbf{A}^+_{AP_i}$, $\mbf{A}^+_{P_iB}$ within $\epsilon'$, respectively, where $\epsilon = \min({n \choose k}\frac{m+1}{|F|}, 1)$ and $\epsilon' = \min(\frac{s}{|F|},1)$. 
\end{theorem}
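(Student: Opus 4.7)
The plan is to derive this theorem directly from the composability theorem (\cref{thm:composability}) by chaining together two results: (a) the security of the skeleton DSKE protocol already established in \cref{thm:skeleton_security}, and (b) the security of each Carter--Wegman authentication subprotocol guaranteed by \cref{thm:hash_function_correctness}. The strategy is a parallel composition of the $2n$ authentication subprotocols, followed by a serial composition with $\pi^s$.

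First, I would invoke \cref{thm:hash_function_correctness} (together with the standard translation from a message-tag forgery bound into a constructive-cryptography statement) to assert that for every $i$, the converter pair $\pi^{\auth}_{AP_i}$ securely constructs the authenticated channel resource $\mbf{A}^+_{AP_i}$ out of $\mbf{K}^{\auth}_{AP_i} \parallel \mbf{C}_{AP_i}$ within $\epsilon'$, and likewise $\pi^{\auth}_{P_iB}$ constructs $\mbf{A}^+_{P_iB}$ out of $\mbf{K}^{\auth}_{P_iB} \parallel \mbf{C}_{P_iB}$ within $\epsilon'$. For all remaining components of $\mf{R}_g$, namely the key resources $\mbf{K}_{AP_i}$, $\mbf{K}_{P_iB}$ and the Hub resources $P_i$, the trivial converter $\mbf{1}$ constructs each of them out of itself within error $0$, by part (iii) of \cref{thm:composability}.

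Next, I would iterate the parallel composition rule (part (ii) of \cref{thm:composability}) over all $2n$ authentication instances together with the trivial converters on the remaining resources. The additivity of the $\epsilon$-values under parallel composition yields
\begin{aeq}
    \mf{R}_g \xrightarrow{(\pi^{\auth}, 2n\epsilon')} \mf{R}_s,
\end{aeq}
where $\pi^{\auth} := (\pi^{\auth}_{AP_1}, \dots, \pi^{\auth}_{AP_n}, \pi^{\auth}_{P_1B}, \dots, \pi^{\auth}_{P_nB})$ and the $2n \epsilon'$ comes from summing the $\epsilon'$ contributions from each of the $2n$ independent authentication subprotocols. Combining this with the already-established construction $\mf{R}_s \xrightarrow{(\pi^s, \epsilon)} \mf{S}$ from \cref{thm:skeleton_security}, I would then apply the serial composition rule (part (i) of \cref{thm:composability}) to conclude
\begin{aeq}
    \mf{R}_g \xrightarrow{(\pi^{\auth}\pi^s,\; \epsilon + 2n\epsilon')} \mf{S},
\end{aeq}
which, on identifying $\pi^g = \pi^{\auth}\pi^s$ (i.e., first running the authentication converters on each communication link, then running the skeleton converters on top), is exactly the claimed statement.

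The only delicate point, and thus the main obstacle, is bookkeeping: one must ensure that each resource of $\mf{R}_g$ is accounted for exactly once in the parallel composition so that the resulting constructed resource really is $\mf{R}_s$ (not some permutation or relabelling), and that the $E$-interfaces of the constructed authenticated channels compose correctly with the $E$-interface that $\pi^s$ expects when connected to $\mf{R}_s$. In particular, the adversary's ability to act on each $\mbf{C}_{AP_i}$ (resp.\ $\mbf{C}_{P_iB}$) translates under $\pi^{\auth}$ into the same adversarial capabilities on $\mbf{A}^+_{AP_i}$ (resp.\ $\mbf{A}^+_{P_iB}$) that the proof of \cref{thm:skeleton_security} already assumes, so that the interfaces align and the composability theorem applies without hidden side conditions. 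Once these alignments are verified, the $\epsilon + 2n\epsilon'$ bound follows immediately.
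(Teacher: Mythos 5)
Your proposal is correct and follows essentially the same route as the paper: decompose $\pi^g$ into the $2n$ authentication subprotocols constructing $\mf{R}_s$ out of $\mf{R}_g$ within $2n\epsilon'$, then compose serially with the skeleton construction $\mf{R}_s \xrightarrow{(\pi^s,\epsilon)} \mf{S}$. The only (immaterial) difference is packaging: you invoke \cref{thm:composability} as a black box, whereas the paper unfolds the argument by verifying the two conditions of \cref{def:composable_security} separately with triangle inequalities and an explicit simulator $\sigma'^E = \sigma^E\sigma^{E^n}_{APB}$, which incidentally shows the correctness condition alone achieves the tighter bound $\epsilon$ before taking the maximum $\max(\epsilon,\epsilon+2n\epsilon') = \epsilon + 2n\epsilon'$.
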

\begin{proof}
	We start with showing the first condition in \cref{def:composable_security} where no adversary is present. We trivially have%
	\begin{aeq}
		\pi^{\auth}(\mbf{K}^{\auth}\parallel\mbf{C}) \alpha^E = \mbf{A}^{+} \beta^E
	\end{aeq}%
    for each choice of subscript ($AP_i$ or $P_iB$), where $\alpha^E$ and $\beta^E$ emulate an honest behaviour at Eve's interfaces, since both systems are equivalent to a channel that faithfully transmits a message between two parties (either $A$ and $P_i$ or $P_i$ and $B$). With a slight abuse of notation, we use $\alpha^{E^{2n}}$ and $\beta^{E^{2n}}$ to denote systems that emulate an honest behaviour at Eve's interfaces in $\pi^g\mf{R}_g$ and $\pi^s\mf{R}_s$, which include $2n$ authentication channels in the systems $\mf{R}_g$ and $\mf{R}_s$, respectively. We use $\gamma^{E^{2n}}$ to denote the system that emulates an honest behaviour at Eve's interfaces in $\mf{S}$. Thus,%
	\begin{aeq}
		& d(\pi^g \mf{R}_g \alpha^{E^{2n}}, \mf{S}\gamma^{E^{2n}}) \\
		\leq & d(\pi^g \mf{R}_g\alpha^{E^{2n}}, \pi^s \mf{R}_s \beta^{E^{2n}}) + d( \pi^s \mf{R}_s \beta^{E^{2n}}, \mf{S}\gamma^{E^{2n}})\\
		= &~ d( \pi^s \mf{R}_s \beta^{E^{2n}}, \mf{S}\gamma^{E^{2n}})	\leq \epsilon, 
	\end{aeq}%
    where the last inequality is shown in \cref{thm:skeleton_security}.
	
    We now analyze the case of an active adversary as required in the second condition in \cref{def:composable_security}. 
    As each of $\pi^{\auth}_{AP_i}$ and $\pi^{\auth}_{P_iB}$ securely construct authenticated channels $\mbf{A}^+_{AP_i}$, $\mbf{A}^+_{P_iB}$ within $\epsilon'$, respectively, there exist converters $\sigma^E_{AP_i}$ and $\sigma^E_{P_iB}$ such that%
    \begin{aeq}
        d(\pi^{\auth}_{AP_i}(\mbf{K}^{\auth}_{AP_i}\parallel\mbf{C}_{AP_i}), \mbf{A}_{AP_i}^+\sigma^E_{AP_i} ) \leq \epsilon',\\
        d(\pi^{\auth}_{P_iB}(\mbf{K}^{\auth}_{P_iB}\parallel\mbf{C}_{P_iB}), \mbf{A}_{P_iB}^+\sigma^E_{P_iB} ) \leq \epsilon'.
    \end{aeq}%
    We use $\sigma^{E}_{AP_iB}$ to denote  $\sigma^E_{AP_i} \parallel\sigma^E_{P_iB}$ and use $\sigma^{E^n}_{APB}$ to denote $n$-copies of $\sigma^{E}_{AP_iB}$ composed in parallel. For the ease of writing, we use a shorthand notation $\mbf{A}^{+}_{AP_iB}$ to denote $\mbf{A}^{+}_{AP_i} \parallel \mbf{A}^{+}_{P_iB}$ and use $\{\mbf{A}^{+}_{AP_iB}\}$ to denote $\mbf{A}^{+}_{AP_1B} \parallel \dots \parallel \mbf{A}^{+}_{AP_nB}$. Similarly, we use $\pi^{\auth}_{AP_iB}(\mbf{KC})$ to denote $\pi^{\auth}_{AP_i}(\mbf{K}^{\auth}_{AP_i} \parallel \mbf{C}_{AP_i}) \parallel \pi^{\auth}_{P_iB}(\mbf{K}^{\auth}_{P_iB} \parallel \mbf{C}_{P_iB})$. 
  
    From the properties of the pseudo-metric, we have%
    \begin{aeq}\label{eq:auth_2n_channel}
        d(\pi^g\mf{R}_g, \pi^s\mf{R}_s\sigma^{E^n}_{APB}) &\leq d(\{\pi^{\text{auth}}_{AP_iB}(\mbf{KC)}\}, \{\mbf{A}^{+}_{AP_iB}\sigma^E_{AP_iB}\})  \\
        & \leq \sum_{i=1}^n d(\pi^{\text{auth}}_{AP_iB}(\mbf{KC)}, \mbf{A}^{+}_{AP_iB}\sigma^E_{AP_iB}) \\
        & \leq 2n \epsilon',
    \end{aeq}%
    where we use \cref{eq:metric_serial_composition,eq:metric_parallel_composition} for the first inequality to remove common systems in $\pi^g\mf{R}_g$ and $\pi^s\mf{R}_s\sigma^{E^n}_{APB}$, \cref{thm:composability} for the second inequality and the security of the authentication protocol for the third inequality and the fact that there are $2n$ uses of the authentication protocol. 

    From \cref{thm:skeleton_security}, we also have a converter $\sigma^E$ such that%
    \begin{aeq}\label{eq:skeleton_secrecy}
        d(\pi^s\mf{R}_s, \mf{S}\sigma^E ) \leq \epsilon.
    \end{aeq}%
    Thus, we show that for the converter $\sigma'^E = \sigma^E \sigma^{E^n}_{APB}$, we have%
    \begin{aeq}
        &~ d(\pi^g\mf{R}_g, \mf{S}\sigma'^E) \\
        \leq &~ d(\pi^g\mf{R}_g, \pi^s\mf{R}_s\sigma^{E^n}_{APB}) + d(\pi^s\mf{R}_s\sigma^{E^n}_{APB}, \mf{S}\sigma^E\sigma^{E^n}_{APB}) \\
        \leq &~ d(\pi^g\mf{R}_g, \pi^s\mf{R}_s\sigma^{E^n}_{APB}) + d(\pi^s\mf{R}_s, \mf{S}\sigma^E)\\
        \leq &~ 2n\epsilon' + \epsilon,
    \end{aeq}%
    where we use the triangle inequality in the first inequality, \cref{eq:metric_serial_composition,eq:metric_parallel_composition} to drop the common system $\sigma^{E^n}_{APB}$ in the second inequality, and \cref{eq:auth_2n_channel,eq:skeleton_secrecy} in the third inequality.

    Combining those two conditions, we show that $\pi^g$ securely constructs the ideal resource $\mf{S}$ out of the resource $\mf{R}_g$ within $\max(\epsilon, \epsilon + 2n \epsilon')=\epsilon + 2n \epsilon'$.
\end{proof}

We then show the $\epsilon$-robustness of the general DSKE protocol.
\begin{theorem}[Robustness of a general DSKE protocol]\label{thm:robustness_general}
    When the upper bound on the number of compromised Security Hubs is no greater than $\min(n-k, k-1)$, a general DSKE protocol is $\epsilon$-robust with $\epsilon = \min({n \choose k} \frac{m+1}{|F|}, 1)$.
\end{theorem}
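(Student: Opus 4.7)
The plan is to reduce this directly to \Cref{thm:robustness_skeleton}, by observing that under a passive Eve the authentication subprotocols behave indistinguishably from the idealized authenticated channels that the skeleton protocol assumes as primitives. So I will not need to redo any of the combinatorial/probabilistic analysis of the secret validation step; that work is already inside the skeleton robustness theorem and inherits the same $\epsilon$.

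More concretely, I would first unpack the definition of passive Eve from \Cref{sec:threat_model}: she may eavesdrop on all communication links and fully control the (at most $\min(n-k,k-1)$) compromised Hubs, but she does not inject or modify anything on any communication link. For each honest sender–receiver pair $(AP_i)$ or $(P_iB)$, the authentication protocol $\pi^{\auth}$ computes a Carter--Wegman tag from $\mbf{K}^{\auth}$, transmits the (message, tag) pair over $\mbf{C}$, and the receiver re-computes and checks the tag. Because Eve is passive, neither the message nor the tag is altered, so the check always succeeds, and the receiver obtains exactly the transmitted message. Hence, for every link, the composed converter–resource system $\pi^{\auth}(\mbf{K}^{\auth}\parallel\mbf{C})$ is input–output equivalent to the authenticated channel $\mbf{A}^+$ used in the skeleton protocol. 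Compromised Hubs behave identically in both settings, since they are treated as black-box resources whose behaviour does not depend on whether the link is a real authenticated channel or one constructed via $\pi^{\auth}$.

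Next I would argue that this equivalence at the link level implies that, under passive Eve, the joint distribution of all inputs and outputs seen by Alice and Bob in one execution of $\pi^g \mf{R}_g$ is identical to the corresponding distribution in one execution of $\pi^s \mf{R}_s$ (with the same set of compromised Hubs behaving the same way). In particular, the event that Bob aborts—which is a function only of the messages he receives on his incoming channels and of his local computation in the secret reconstruction and secret validation steps—has the same probability in the two executions.

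Finally, I would invoke \Cref{thm:robustness_skeleton}: under the hypothesis that the number of compromised Hubs is at most $\min(n-k,k-1)$, the skeleton protocol aborts with probability at most $\min\!\bigl({n \choose k}\tfrac{m+1}{|F|},\,1\bigr)=\epsilon$. By the distributional equivalence just established, the same bound applies to the general DSKE protocol, which is therefore $\epsilon$-robust. The only subtlety worth spelling out carefully is the definition of passive Eve for the constructed channels: one must verify that ``passive on the communication link $\mbf{C}_{AP_i}$'' really forbids her from generating any forged (message, tag) pair on that link, so that the authentication check never triggers spuriously. Once this is explicit, no new calculation is required and no essential obstacle remains.
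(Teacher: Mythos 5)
Your proposal is correct and follows essentially the same route as the paper: under a passive Eve each constructed channel $\pi^{\auth}(\mbf{K}^{\auth}\parallel\mbf{C})$ faithfully transmits its message just as the ideal authenticated channel $\mbf{A}^+$ does, so the general protocol's behaviour (and hence its abort probability) coincides with that of the skeleton protocol and \Cref{thm:robustness_skeleton} applies directly. Your version merely spells out the link-level equivalence and the resulting distributional identity in more detail than the paper does.
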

\begin{proof}
    When Eve is passive, a general DSKE protocol behaves in the same way as its corresponding skeleton protocol. The reason is as follows. A general DSKE and its corresponding skeleton protocol differ in the availability of authenticated channels. In the general DSKE protocol, an authentication protocol constructs an authenticated channel out of a secret key resource and an insecure channel, while in the skeleton protocol, authenticated channels are assumed to be available. When Eve is passive, she does not tamper each communication channel. Each authenticated channel in the skeleton protocol and each channel constructed by the authentication protocol in the general protocol both faithfully transmit messages. Thus, the result of \cref{thm:robustness_skeleton} directly applies.
\end{proof}
 
\FloatBarrier

\section*{Acknowledgement}
We would like to thank David Jao for helpful comments. This work is supported by MITACS, Natural Sciences and Engineering Research Council of Canada (NSERC), and particularly Innovative Solutions Canada. 
\addcontentsline{toc}{section}{Acknowledgement}

\newpage
\appendices
\crefalias{section}{appendix}

\section{DSKE Protocol} \label{app:protocol}

\begin{table}[!ht]
    \begin{center}
    \begin{tabular}{|c|l|}
    	\hline
    	\textbf{Symbol} & \textbf{Description} \\
    	\hline
    	$A$ & reference to Alice, or shorthand for $A_i$ \\
    	\hline
    	$A_i$ & Hub $P_i$'s identifier for Alice \\
    	\hline
    	$f$ & polynomial over field $F$ \\
    	\hline
    	$g$	& bijective mapping $\{0,\dots,|F|-1\} \to F$ \\
    	\hline
    	$\mathbf{H}$	& family of 2-universal hash functions \\
    	\hline
        $\mathbf{H}'$	& family of 2-universal hash functions \\
    	\hline
    	$h'_{u^A}$	& function in $\mathbf{H}'$ selected by $u^A$ \\
    	\hline
    	$h_{v^A_i}$ & function in $\mathbf{H}$ selected by $v^A_i$ \\
    	\hline
    	$H^A_i$	 & first table, for Hub $P_i$ and Alice \\
    	\hline
    	$\overline{H}^A_i$ &	second table, for Hub $P_i$ and Alice \\
    	\hline
    	$H^B_i$ &	first table, for Hub $P_i$ and Bob \\
    	\hline
    	$\overline{H}^B_i$ &	second table, for Hub $P_i$ and Bob \\
    	\hline
    	$i$	& $i\in\{1,\dots,n\}$ is a share (Hub) index. \\
    	\hline
    	$j^A_i$	& index into $H^A_i$ \\
    	\hline
    	$\overline{j}^B_i$ & index into $\overline{H}^B_i$ \\
    	\hline
    	$k$	& threshold for the sharing scheme \\
    	\hline
    	$K^A$ 	&	identifier that tracks the current $S^A$ \\
    	\hline
    	$m$	 & number of field elements in $S^A$ \\
    	\hline
    	$M^A_i$	 & message from Alice to Hub $P_i$, no tag \\
    	\hline
    	$\overline{M}^B_i$ &	message from Hub $P_i$ to Bob, no tag \\
    	\hline
    	$n$	& number of Hubs (and hence shares) \\
    	\hline
    	$o^A$ &	secret-authenticating tag for $S^A$ \\
    	\hline
    	$P_i$	& identifier referring to the $i$th Hub \\
    	\hline
    	$R^A_i$	 & $3+m$ field elements of $H^A_i$ \\
    	\hline
    	$\overline{R}^B_i$ & $3+m$ field elements of $\overline{H}^B_i$ \\
    	\hline
    	$S^A$	& Alice's secret of $m$ field elements \\
    	\hline
    	$t^A_i$	& message tag for $M^A_i$ using $v^A_i$ \\
    	\hline
    	$\overline{t}^B_i$	& message tag for $\overline{M}^B_i$ using $\overline{v}^B_i$ \\
    	\hline
    	$u^A$	& $3$ field elements of $Y^A_i$ for $o^A$ \\
    	\hline
    	$v^A_i$	& $2$ field elements of $H^A_i$ for $t^A_i$	 \\
    	\hline
    	$\overline{v}^B_i$	& $2$ field elements of $\overline{H}^B_i$ for $\overline{t}^B_i$ \\
    	\hline
    	$x_i$	& $x$-coordinate for Hub $P_i$ \\
    	\hline
    	$Y^A_i$	& $3+m$ field elements: $u^A\parallel S^A$ \\
    	\hline
    	$Z^A_i$	& encrypted $Y^A_i$ in the message $M^A_i$ \\
    	\hline
    	$\overline{Z}^B_i$	& encrypted $Y^A_i$ in the message $\overline{M}^B_i$ \\
    	\hline
    \end{tabular}
    \end{center}
    \caption{\label{table:symbols}Symbols used in the protocol description in \Cref{app:protocol}. \textit{Note}: Symbol capitalization may differ from the main text.}
\end{table}


\subsection{Parameter choices}

We present the general DSKE protocol (as described in the DKSE paper \cite{Lo2022}) with the following simplifying choices in this paper with the aim of establishing a baseline proof of its security.

\begin{itemize}[leftmargin=1.5em]
    \setlength\itemsep{0em}

    \item A predetermined finite field $F$ is used throughout, and \textit{element} refers an element of $F$.
 
    \item The final secret length $m$ is predetermined. 
 
    \item The parameters of the threshold sharing scheme, $n$ and $k$, are predetermined. For clarity, also $k_B = k$ in \cite{Lo2022}.

    \item The number of recipients is limited to 1.
 
    \item The hash function families are predetermined:
    $\mathbf{H} = \{h_{c,d}: F^m \to F: (y_{1},\dots,y_{m}) \mapsto d + \sum_{j=1}^m c^{j} y_{j} \}$ and
    $\mathbf{H}' = \{h'_{c,d,e}: F^m \to F: (y_{(1)},\dots,y_{(m)}) \mapsto d + ce + \sum_{j=1}^m c^{j+1} y_{(j)} \}$.
 
	\item The Shamir $(n, k)$-threshold secret sharing scheme uses \\$f : F \to F : x \mapsto c_{0} + c_{1}x + \cdots + c_{k-1}x^{k-1}$.

	\item The mapping for assigning $x_i$ to the secret and Hubs $\{0, \dots, n\} \to F : i \mapsto x_i$ is predetermined.

    \item The bijective mapping $g:\{0,\dots,|F|-1\} \to F$ is predetermined. This allows encoding of an integer as an element in the protocol.

	\item A Hub sends each client two tables, e.g. $H^{A}_i$ and $\overline{H}^{A}_{i}$. An overline indicates that it for a Hub's sending use.

	\item Mutual identity validation is assumed and the corresponding identifiers $P_{i}, A_{i}, B_i$ are predetermined.

	\item The identifiers $A_{i}$ are equal for all $i$, and we write the identifiers $A_{i}$ as simply $A$ (notation abuse), as for $B$.  $A$ and $B$ also denote the identities Alice and Bob.

	\item The message tag validation key is used once only.

\end{itemize}

\subsection{Baseline protocol}

\begin{enumerate}[label={(\arabic*)}, ref=\arabic*]

	\item \textbf{PSRD generation and distribution}
	
    Honest Hubs securely provide a copy of the two tables of ordered elements (for Alice, $H^A_i$ and $\overline{H}^{A}_{i}$), with assured mutual identity verification, data confidentiality and data authenticity. 

    For simplicity, Alice tracks use of elements in $H^A_i$ by retaining an integer offset $j^A_i$ into $H^A_i$ up to which the elements have been used and erased, initialized as $j^A_i:=0$ upon receiving the tables from $P_i$. Each Hub similarly tracks usage in $\overline{H}^{A}_i$ through $\overline{j}^A_i$.  Since messages may be received out of order, the receiver must individually track which elements of the table have been used.

    \item \textbf{Peer identity establishment}
    
    Alice and Bob need to establish the authenticity of each other's identities. This phase, which in a practical implementation is necessary, is made redundant by the assumed mutual knowledge of identities and identifiers in the form presented in this paper. In practice, each DSKE client can query the identities of other clients with the help of Security Hubs and an information-theoretically secure message tag. See \cite{Lo2022} for details.

    \item \textbf{Secret agreement}

    \begin{enumerate}[label=(\alph*),leftmargin=0.5em, ref=\theenumi{}.\alph*]
        \item \textit{Share generation}
        \begin{enumerate}[label=(\roman*),leftmargin=0.5em, ref=(\theenumii{}.\roman*)]
            \item Alice retrieves the unused sequences $R^A_i$ (length $3+m$) and $v^A_i$ (length $2$) from at offset $j^A_i$ in $H^A_i$, erases them from the table, and remembers $j^A_i+3+m+2$ as $j^A_i$ on the next iteration of the protocol. \label[step]{step:3a1}

            \item Alice uses $3+m$ identical but independent Shamir sharing schemes in parallel, one for each field element in the sequence.
            Alice sets:%
            \begin{aeq}
                Y^A_i := R^A_i \quad  \forall i \in \{1,\dots,k\}.
            \end{aeq}%
            \begin{aeq}
                f_{-2}(x_i)\parallel f_{-1}(x_i)\parallel f_{0}(x_i)\parallel\dots\parallel f_{m}(x_i):=Y^A_i \\ \forall i \in \{1, \dots, k\}.
            \end{aeq}%
            \item Alice solves for $Y^A_i$ in the following, arriving at $n$ shares $Y^A_i$ for $i \in\{1, \dots, n\}$, each being a $(3+m)$-element sequence:%
            \begin{aeq}\label{eq:share}
                f_{-2}(x_i)\parallel f_{-1}(x_i)\parallel f_{0}(x_i)\parallel\dots\parallel f_{m}(x_i)=:Y^A_i \\ \forall i \in \{k+1, \dots, n\}.
            \end{aeq}%
        \end{enumerate}

        \item \textit{Share distribution}
        \begin{enumerate}[label=(\roman*),leftmargin=0.5em]
            \item\label{step:share_distribution_alice} Operations by Alice for share distribution:
            \begin{enumerate}[label=(\arabic*),leftmargin=0.5em]
                \item Alice solves for the secret $Y^A_0=f_{-2}(x_0)\parallel f_{-1}(x_0)\parallel f_{0}(x_0)\parallel\dots\parallel f_{m}(x_0)$, as she did for the shares in \cref{eq:share}.

                \item Alice partitions $Y^A_0$ into $u^A$ of 3 elements and $S^A$ of $m$ elements:%
                \begin{aeq}
                    Y^A_0 =: u^A \parallel S^A.
                \end{aeq}%

                \item Alice calculates an authentication code $o^A$ for the secret to be agreed, which we call the \textit{secret-authenticating tag}:%
                \begin{aeq}
                    o^A := h'_{u^A}(S^A)
                \end{aeq}%
                using the 2-universal family of hash functions $\mathbf{H}'$ with the choice specified by $u^A$.

                \item Alice calculates%
                \begin{aeq}
                    Z^A_i := Y^A_i - R^A_i \quad i \in \{1, \dots, n\}.
                \end{aeq}%
                Note:  $Z^A_i$ is just a tuple of $3+m$ zero elements for each $i \in \{1, \dots, k\}$ because of the cancellation. The protocol could omit this field, but it is kept here to simplify the presentation.

            	\item Alice chooses a secret identification number $K^A$ so that $(A,K^A)$ is unique.
            	The message is:%
                \begin{aeq}
                    M^A_i :=  P_i \parallel A \parallel B \parallel K^A \parallel g(j^A_i) \parallel Z^A_i \parallel o^A.  
                \end{aeq}%
                Fields that omit the index $i$ are the same across all $n$ messages sent to the Hubs, such as $o^A$. $t^A_i$ is different in each message.

                \item Alice calculates the message tag $t^A_i$ using the function family $\mathbf{H}$ as%
                \begin{aeq}
                    t^A_i := h_{v^A_i} ( M^A_i ).
                \end{aeq}%

                \item Alice sends the element sequence%
                \begin{aeq}
                    M^A_i \parallel t^A_i 	
                \end{aeq}%
                to Hub $P_i$, for $i \in \{1, \dots, n\}$.
            \end{enumerate}

            \item Operations by each Hub $P_i$ for share distribution, related to Alice:
            \begin{enumerate}[label=(\arabic*),leftmargin=0.5em]
                \item Hub $P_i$ receives the sequence from Alice,%
                \begin{aeq}
                    M^A_i \parallel t^A_i,
                \end{aeq}%
                which may be corrupted or even lost, which it splits into its components $M^A_i$ and $t^A_i$.

                \item The Hub splits $M^A_i$ into its components via%
                \begin{aeq}
                    M^A_i =:  P_i \parallel A \parallel B \parallel K^A \parallel g(j^A_i) \parallel Z^A_i \parallel o^A.
                \end{aeq}%

            	\item The Hub verifies that the tuple $(P_i, A, B)$ is allowable and was received via the routing from $A$, discarding the message if either check fails. The latter check is significant for robustness against depletion of $H^A_i$.

            	\item The Hub also verifies that the $3+m+2$ elements starting at offset $j^A_i$ in its copy of $H^A_i$ are still unused, discarding the message if any of these elements have been used. Discarding the message at any time up to this point does not deplete elements in the table.
    
            	\item The Hub retrieves $R^A_i$ of $3 + m$ elements at offset $j^A_i$ and $v^A_i$ of 2 elements at offset $j^A_i + 3 + m$ from the table, marking them as used and erasing them from the table.

                \item The Hub verifies the relation%
                \begin{aeq}
                    t^A_i = h_{v^A_i}( M^A_i ), 	
                \end{aeq}%
                discarding the message if this fails. The portion of the table addressed by the message cannot be re-used on failure, due to the single-use constraint in the simplifying assumptions.

                \item The Hub calculates%
                \begin{aeq}
                    Y^A_i := Z^A_i + R^A_i .
                \end{aeq}%
            \end{enumerate}

            \item Operations by each Hub $P_i$ for share distribution, related to Bob:
            \begin{enumerate}[label=(\arabic*),leftmargin=0.5em]
                \item The Hub chooses $\overline{R}^B_i$ and $\overline{v}^B_i$ from $\overline{H}^B_i$ using $\overline{j}^B_i$ in the same way that Alice did from $H^A_i$ in \cref{step:3a1} with the corresponding variables, in the process erasing elements from $\overline{H}^B_i$.  The Hub uses $\overline{R}^B_i$ as an encryption key; unlike Alice, it never treats this as a share.

                \item The Hub calculates%
                \begin{aeq}
                    \overline{Z}^B_i := Y^A_i - \overline{R}^B_i .
                \end{aeq}%

            	\item The Hub generates the message $\overline{M}^B_i$:%
            	\begin{aeq}
                    \overline{M}^B_i :=  P_i \parallel A \parallel B \parallel K^A \parallel g(\overline{j}^B_i) \parallel \overline{Z}^B_i \parallel o^A .
            	\end{aeq}%

                \item The Hub calculates the message tag $\overline{t}^B_i$ as%
                \begin{aeq}
                    \overline{t}^B_i := h_{\overline{v}^B_i} (\overline{M}^B_i) 
                \end{aeq}%

                \item The Hub sends to Bob the element sequence%
                \begin{aeq}
                    \overline{M}^B_i \parallel \overline{t}^B_i .
                \end{aeq}%
            \end{enumerate}

            \item Operations by Bob for each Hub $P_i$ for share distribution, related to Alice:
            \begin{enumerate}[label=(\arabic*),leftmargin=0.5em]
                \item Bob receives the sequence from Hub $P_i$,%
                \begin{aeq}
                    \overline{M}^B_i \parallel \overline{t}^B_i, 
                \end{aeq}%
                which he splits into its components.
    
                \item Bob then splits $\overline{M}^B_i$ into its components%
                \begin{aeq}
                    \overline{M}^B_i =: P_i \parallel A \parallel B \parallel K^A \parallel g(\overline{j}^B_i) \parallel \overline{Z}^B_i \parallel o^A. 
                \end{aeq}%

                \item Bob verifies that the tuple $(P_i, A, B)$ is allowable and was received via the routing from $P_i$, discarding the message if either check fails.  Performing the latter check before proceeding further is significant for robustness.

                \item Bob verifies that the $3+m+2$ elements from offset $\overline{j}^B_i$ in his copy of $\overline{H}^B_i$ are unused, failing which the message is discarded.  Discarding the message at any time up to this point does not deplete elements in the table.

                \item Bob retrieves $\overline{R}^B_i$ of $3 + m$ elements at offset $\overline{j}^B_i$ and $\overline{v}^B$ of $2$ elements at offset $\overline{j}^B_i+3+m$ from the table, marking them as used and erasing them from the table.

                \item Bob then verifies the relation%
                \begin{aeq}
                    \overline{t}^B_i = h_{\overline{v}^B_i} (\overline{M}^B_i), 	
                \end{aeq}%
                discarding the message if this fails.  Note that the portion of the table addressed by the message cannot be re-used on failure, due to the single-use constraint in the simplifying assumptions.

                \item Bob then calculates%
                \begin{aeq}
                    Y^A_i := \overline{Z}^B + \overline{R}^B_i. 
                \end{aeq}%
            \end{enumerate}
        \end{enumerate}

        \item \textit{Secret reconstruction}

        \begin{enumerate}[label=(\roman*),leftmargin=0.5em]
        	\item Bob assembles all subsets of $k$ messages that have $(A, B, K^A, o^A)$ in common.  Associated with each message is a tuple $(x_i, Y^A_i)$.
        	\item Bob solves for a candidate $Y^A_0$ in $f_{-2}(x_0) \parallel \dots \parallel f_m(x_0) = Y^A_0$ from the $(x_i, Y^A_i)$ tuples of each subset, similar to the operation that Alice did in \ref{step:share_distribution_alice}, except with the set of indices $i$ varying by subset, obtaining a candidate per subset. Bob may eliminate duplicates here. 
            \item Bob partitions each distinct candidate $Y^A_0$ of two strings of $3$ and $m$ elements respectively:%
            \begin{aeq}
            	Y^A_0 =: u^A \parallel S^A 	
            \end{aeq}%
            and forms the candidate tuple $(u^A, S^A, o^A)$. 
        \end{enumerate}
    \end{enumerate}

    \item \textbf{Secret validation}

    Bob discards those candidate tuples $(u^A, S^A, o^A)$ for which the following relation does not hold:%
    \begin{aeq}
        o^A = h'_{u^A} ( S^A ). 
    \end{aeq}%
    Bob aborts the protocol if he has no remaining candidate tuples or non-identical remaining candidate tuples. Otherwise, he terminates the protocol with the secret $S^A$. 
    The DSKE protocol ends at this point. The tuple $(A, B, K^A, S^A)$, is known by both Alice and Bob.

\end{enumerate}

\begin{remark}
	Only Bob knows whether the protocol completed successfully. Communicating the completion and the use of the secret can be managed through the tuple $(A, B, K^A)$ and subsequent communication.
\end{remark}

\FloatBarrier

\section{Security of hashing for messages} \label{app:hash_message}

Consider a family of polynomial functions, where $c$, $d$ and $v_{j}$ are elements of a finite field $F$ \cite[Section 4.2]{Bernstein2007}:%
\begin{aeq}
    \mbf{H}=\{h_{c,d}\colon{}F^{s}\to{}F\colon{}(v_{1},\dots,v_{s})\mapsto{}d+\sum_{j=1}^{s}c^{j}v_{j}\}
\end{aeq}%
\Cref{thm:message_correctness} gives the best guessing probability that Eve's message $\mbf{v}^{*}$ is both modified from $\mbf{v}$ and validates against any tag $t^{*}$ with the selection of the hash function $h_{c,d}$ unknown, excluding the case of an empty message ($s=0$).

\begin{theorem}\label{thm:message_correctness}
    Denote $\mbf{v} = (v_{1},\dots,v_{s})$ and $\mbf{v}^{*} = (v^{*}_{1},\dots,v^{*}_{s})$. \\
    Let $\Omega=F^2$ be a sample space with uniform probability. \\
    Let $h_{C,D}(\mbf{v})=D+\sum_{j=1}^{s}C^{j} v_{j}$ define a family of functions with random variables $(C,D)\in\Omega$ as selection parameters. 
    Let $s\ne0$. Let $t\in F$ be given.
    Then, $\max_{t^{*},\mbf{v}^{*}\ne\mbf{v}} \Pr(t^{*}=h_{C,D}(\mbf{v}^{*})~|~t=h_{C,D}(\mbf{v})) = \min(\frac{s}{|F|},1)$.
\end{theorem}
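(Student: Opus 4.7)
The plan is to reduce the conditional event to a polynomial-in-$C$ equation and then apply the standard fact that a nonzero polynomial of degree at most $s$ over $F$ has at most $s$ roots. First I would compute the joint probability. The condition $t=h_{C,D}(\mathbf{v})$ is equivalent to $D=t-\sum_{j=1}^{s}C^{j}v_{j}$, which, for each value of $C$, pins down $D$ uniquely; hence, using the uniformity of $(C,D)$ over $F^{2}$, $\Pr(t=h_{C,D}(\mathbf{v}))=1/|F|$. Similarly, the joint event $t=h_{C,D}(\mathbf{v})\wedge t^{*}=h_{C,D}(\mathbf{v}^{*})$ is equivalent to $D=t-\sum_{j}C^{j}v_{j}$ together with
\begin{equation*}
q(C):=\sum_{j=1}^{s}C^{j}(v^{*}_{j}-v_{j})-(t^{*}-t)=0.
\end{equation*}
Therefore the joint probability is $N/|F|^{2}$, where $N$ is the number of roots of $q$ in $F$, and the conditional probability equals $N/|F|$.

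Next I would bound $N$. Because $\mathbf{v}^{*}\ne\mathbf{v}$, at least one of the coefficients $v^{*}_{j}-v_{j}$ is nonzero, so $q$ is a nonzero polynomial of degree at most $s$. By the usual root-count bound for polynomials over a field, $N\le s$, and of course trivially $N\le|F|$, giving the upper bound $\min(s/|F|,1)$ on the conditional probability. Taking the maximum over $t^{*}$ and $\mathbf{v}^{*}\ne\mathbf{v}$ preserves this bound.

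For tightness, I would exhibit choices of $\mathbf{v}^{*}$ and $t^{*}$ that achieve the maximum. If $s\le|F|$, pick $s$ distinct elements $\alpha_{1},\dots,\alpha_{s}\in F$ and set $q(X)=\prod_{i=1}^{s}(X-\alpha_{i})$; reading off its coefficients determines $v^{*}_{j}-v_{j}$ for $j=1,\dots,s$ (with $v^{*}_{s}-v_{s}=1\ne0$, so indeed $\mathbf{v}^{*}\ne\mathbf{v}$) and $t^{*}-t$ from the constant term. This polynomial has exactly $s$ roots in $F$, so the conditional probability is $s/|F|$. If $s>|F|$, take the polynomial $X^{|F|}-X$, which is nonzero, has degree $|F|\le s$, and vanishes on every element of $F$ (e.g., by the structure of $F^{\times}$ as a group of order $|F|-1$); its coefficients yield an admissible $\mathbf{v}^{*}\ne\mathbf{v}$ (with $t^{*}=t$) for which the probability is $1$.

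The only mildly delicate step is the bookkeeping in the conditional-probability calculation, specifically keeping track that conditioning on the measure-$1/|F|$ event $t=h_{C,D}(\mathbf{v})$ reduces the two-variable event in $(C,D)$ to the one-variable event $q(C)=0$ in $C$ alone; once that reduction is in place, everything else is a direct appeal to the root bound for univariate polynomials over a field, together with the explicit polynomials above for tightness.
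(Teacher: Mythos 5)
Your proof is correct and follows essentially the same route as the paper's: condition on $t=h_{C,D}(\mathbf{v})$ to reduce the problem to counting roots of the difference polynomial $\sum_{j=1}^{s}C^{j}(v^{*}_{j}-v_{j})-(t^{*}-t)$ over the uniform variable $C$, then apply the degree-$s$ root bound. The only difference is that you make the attainability of $\min(\frac{s}{|F|},1)$ explicit (via $\prod_{i}(X-\alpha_{i})$ for $s\le|F|$ and $X^{|F|}-X$ for $s>|F|$), where the paper merely asserts that the bound is attainable for some $\mathbf{v}^{*}$; this is a welcome addition rather than a divergence.
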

\begin{proof}
    We are given that $t=h_{C,D}(\mbf{v})$. This may be written%
    \begin{aeq}\label{eq:thm_message_correctness_a}
        t=D+\sum_{j=1}^{s}C^{j} v_{j}.
    \end{aeq}%
    This constraint serves to eliminate all pairs $(C,D)$ that do not solve \cref{eq:thm_message_correctness_a}. For every value of $C$ in $F$, this equation determines a unique value for $D$, resulting in exactly $|F|$ pairs that meet the constraint.  Conversely, for every value of $D$, there may be many values of $C$, and by inference, there may be many values of $D$ for which there is no corresponding solution for $C$. Since the \textit{a priori} probability on $\Omega$ (prior to imposing the constraint) is uniform, and each value for $C$ occurs in a pair exactly once, the \textit{a posteriori} marginal distribution on $C$ (i.e. given the constraint) is uniform, but the marginal distribution on $D$ is potentially nonuniform.

    To obtain the probability that $t^{*}=h_{C,D}(\mbf{v}^{*})$ holds, we subtract from it the given \cref{eq:thm_message_correctness_a} to obtain an equivalent equation%
    \begin{aeq}\label{eq:thm_message_correctness_b}
        t^{*}-t=\sum_{j=1}^{s}C^{j} (v^{*}_{j}-v_{j}).
    \end{aeq}%

    For any values of $t$, $t^{*}$, $\mbf{v}$ and $\mbf{v}^{*}\ne\mbf{v}$, the polynomial equation \cref{eq:thm_message_correctness_b} has up to $\min(s,|F|)$ solutions for $C$, with the bound attainable for some $\mbf{v}^{*}$.  Since $C$ is uniform, each solution has probability $\frac{1}{|F|}$.  Thus with $s\ne0$,%
    \begin{aeq}
        \max_{t^{*},\mbf{v}^{*}\ne\mbf{v}} \Pr(t^{*}-t=\sum_{j=1}^{s}C^{j}(v^{*}_{j}-v_{j}))=\min\big(\frac{s}{|F|},1\big)
    \end{aeq}%
    It follows from the equivalence with \cref{eq:thm_message_correctness_a} that%
    \begin{aeq}
        \max_{t^{*},\mbf{v}^{*}\ne\mbf{v}} \Pr(t^{*}=h_{C,D}(\mbf{v}^{*})~|~t=h_{C,D}(\mbf{v}))=\min\big(\frac{s}{|F|},1\big).
    \end{aeq}%
\end{proof}

\FloatBarrier

\section{Security of validated secret sharing} \label{app:hash_secret}

\subsection{Share manipulation in secret sharing}

We use a prime to denote an associated additive difference variable. For example, $a'$ denotes a difference (error), used to produce $a+a'$ from $a$.

\begin{lemma} \label{lem:sum_independence}
    Let $\Omega=F^2$ be a sample space with $(D,X)\in\Omega$. Let $D$ and $X$ be independent random variables with $D$ uniform. Let $z\in F$ be arbitrary. \\
    Then $\Pr(D+X=z)=\frac{1}{|F|}$.
\end{lemma}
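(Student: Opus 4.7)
The plan is to reduce this to a direct application of the law of total probability (eq.~\ref{eq:prob_prop}) by conditioning on $X$, after which independence converts a conditional probability on $D+X$ into one on $D$, and uniformity of $D$ finishes the job. I expect no serious obstacle here; it is a one-line computation dressed up as a lemma, and its purpose is to isolate the one property (the ``one-time-pad'' masking effect) that Theorems~\ref{thm:hash_function_confidentiality} and~\ref{thm:shamir_correctness} will repeatedly invoke.

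First I would apply \cref{eq:prob_prop} with the proposition $\prop(D,X) := (D+X = z)$ and partition on the value of $X$, writing
\begin{aeq}
    \Pr(D+X=z) = \sum_{x\in F} \Pr(D+X=z \mid X=x)\,\Pr(X=x).
\end{aeq}
Next, for each fixed $x$, the event $D+X=z$ conditioned on $X=x$ is the event $D = z-x$, so
\begin{aeq}
    \Pr(D+X=z \mid X=x) = \Pr(D = z-x \mid X=x).
\end{aeq}
By independence of $D$ and $X$ (\cref{eq:prob_independent} applied to rewrite the conditional as an unconditional probability, as in \cref{eq:prob_uniform}), and the uniformity of $D$ on $\Omega_D = F$, this equals $1/|F|$, independent of $x$.

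Finally I substitute back and pull the constant $1/|F|$ outside the sum, leaving $\sum_{x\in F}\Pr(X=x) = 1$ by \cref{eq:prob_sum_to_one}, which yields $\Pr(D+X=z) = 1/|F|$ as claimed. The only subtlety worth flagging is that the lemma as stated places $(D,X)$ in $F^2$, so the sample space $\Omega_D = F$ for the marginal of $D$ matches the range of the additive shift $z-x$; this is what makes the uniform-$D$ step legitimate for every $x$. Everything else is mechanical.
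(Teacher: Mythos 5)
Your proof is correct and follows essentially the same route as the paper's: condition on $X$ via \cref{eq:prob_prop}, use independence and uniformity of $D$ to evaluate each conditional term as $1/|F|$, and sum to one via \cref{eq:prob_sum_to_one}. The only difference is that you make explicit the intermediate rewriting of the event $\{D+X=z\}$ given $X=x$ as $\{D=z-x\}$, which the paper folds directly into its citation of \cref{eq:prob_uniform}.
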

\begin{proof}
    For any $z\in F$,%
    \begin{aeq}
        &\Pr(D+X=z) & \\
        &    = \sum_{x\in F}\Pr(D+X=z|X=x)\Pr(X=x) & [\text{by \cref{eq:prob_prop}}] \\
        &    = \sum_{x\in F}\frac{1}{|F|}\Pr(X=x)  & [\text{by \cref{eq:prob_uniform}}] \\
        &    = \frac{1}{|F|}    & [\text{by \cref{eq:prob_sum_to_one}}].
    \end{aeq}%
\end{proof}

\begin{lemma}\label{lem:indepsum}
    Let $\Omega=F^2$ be a sample space. Let random variables $(D,X)\in\Omega$ be mutually independent and $D$ uniform.
    Then $X+D$ and $X$ are mutually independent.
\end{lemma}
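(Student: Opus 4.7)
The plan is to verify the definition of mutual independence given in \cref{eq:prob_independent} directly for the pair $(X+D, X)$. I would first fix arbitrary values $z, x \in F$ and compute the joint probability $\Pr((X+D, X) = (z, x))$. The key observation is that the event $\{X+D = z\} \cap \{X = x\}$ is the same event as $\{D = z - x\} \cap \{X = x\}$, since on the set $\{X = x\}$ the condition $X + D = z$ is equivalent to $D = z - x$.

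Using the assumed mutual independence of $D$ and $X$ (via \cref{eq:prob_independent}), this joint probability factors as
\begin{equation}
    \Pr((X+D, X) = (z, x)) = \Pr(D = z - x)\Pr(X = x).
\end{equation}
Since $D$ is uniform on $F$, we have $\Pr(D = z - x) = \frac{1}{|F|}$, so the joint probability equals $\frac{1}{|F|}\Pr(X = x)$. On the other hand, \cref{lem:sum_independence} gives $\Pr(X + D = z) = \frac{1}{|F|}$. Substituting this back yields
\begin{equation}
    \Pr((X+D, X) = (z, x)) = \Pr(X+D = z)\Pr(X = x),
\end{equation}
which is precisely the mutual independence criterion from \cref{eq:prob_independent}.

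There is no real obstacle here: the lemma is essentially a restatement of the fact that a uniform shift by an independent variable preserves independence, and all the ingredients (the factorization from prior independence, uniformity of $D$, and the marginal computation of $X+D$) are already supplied by the preceding lemma and notational setup. The only care needed is to not confuse ``$D, X$ independent'' with the stronger claim being proved, and to explicitly invoke \cref{lem:sum_independence} rather than re-deriving the marginal of $X+D$.
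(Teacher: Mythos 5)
Your proof is correct and is essentially identical to the paper's: both verify the factorization criterion of \cref{eq:prob_independent} pointwise by rewriting the joint event $\{X+D=z, X=x\}$ as $\{D=z-x, X=x\}$, factoring via the assumed independence, applying the uniformity of $D$, and then invoking \cref{lem:sum_independence} to identify $\frac{1}{|F|}$ with $\Pr(X+D=z)$. The only difference is cosmetic — you parametrize the joint outcome as $(z,x)$ while the paper writes it as $(x+d,x)$.
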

\begin{proof}
    For each $(x,d) \in \Omega$,%
    \begin{aeq}
        &\Pr((X+D,X) = (x+d,x)) \\
        = ~& \Pr((D,X)=(d,x)) & [\text{since }X=x] \\ 
        = ~& \Pr(D=d)\Pr(X=x) & [\text{independence of } \\
        & & D\text{ and }X] \\
        = ~& \frac{1}{|F|}\Pr(X=x) & [\text{uniformity of }D] \\
        = ~& \Pr(X+D=x+d)\Pr(X=x) & [\text{by \cref{lem:sum_independence}}].
    \end{aeq}%
    By \cref{eq:prob_independent}, it follows that $X+D$ and $X$ are mutually independent.
\end{proof}

Consider an $(n,k)$-threshold Shamir secret sharing scheme that uses a polynomial over the field $F$.

\begin{theorem}\label{thm:shamir_linearity}
    In a Shamir secret sharing scheme with threshold $k$, for any given set of $k$ shares, the secret is a linear combination of the shares.
\end{theorem}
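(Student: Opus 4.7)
The plan is to use Lagrange interpolation. The Shamir scheme chooses a polynomial $f(x)=c_0+c_1x+\cdots+c_{k-1}x^{k-1}$ of degree at most $k-1$ over $F$, assigns the secret as $f(x_0)$ (for a designated evaluation point $x_0$), and defines the $i$th share as $f(x_i)$ for pairwise distinct $x_i$. The key observation I would use is that, once $k$ evaluation points are fixed, the map sending the coefficient vector $(c_0,\dots,c_{k-1})$ to the vector of share values is a square Vandermonde-type linear bijection; equivalently, any polynomial of degree at most $k-1$ is uniquely determined by its values at any $k$ distinct points, and moreover its value at any further point is an $F$-linear function of those $k$ values.

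Concretely, given a set of $k$ share indices $I=\{i_1,\dots,i_k\}$ with corresponding share values $y_{i_j}=f(x_{i_j})$, I would write down the Lagrange interpolation formula
\begin{equation}
f(x)=\sum_{j=1}^{k} y_{i_j}\,L_j(x),\qquad L_j(x):=\prod_{\ell\neq j}\frac{x-x_{i_\ell}}{x_{i_j}-x_{i_\ell}},
\end{equation}
noting that each $L_j(x)\in F[x]$ is well defined because the $x_{i_\ell}$ are distinct (so the denominators are nonzero elements of the field). Evaluating at $x_0$ gives
\begin{equation}
f(x_0)=\sum_{j=1}^{k} \lambda_j\, y_{i_j},\qquad \lambda_j:=L_j(x_0)\in F,
\end{equation}
so the secret $f(x_0)$ is indeed an $F$-linear combination of the $k$ shares $y_{i_j}$, with coefficients $\lambda_j$ that depend only on the public data $x_0,x_{i_1},\dots,x_{i_k}$ and not on the shares themselves.

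To make the argument airtight I would also briefly justify why the polynomial obtained from the Lagrange formula is genuinely the one chosen by the dealer: the interpolant has degree at most $k-1$ and matches $f$ at the $k$ distinct points $x_{i_1},\dots,x_{i_k}$, so their difference is a polynomial of degree less than $k$ with $k$ roots and hence is the zero polynomial. This uniqueness step is the only real content; the linearity of the secret in the shares is then immediate from the formula above. Since the whole argument is just Lagrange interpolation over a field, I do not anticipate any serious obstacle; the only point worth being careful about is ensuring the denominators $x_{i_j}-x_{i_\ell}$ are invertible in $F$, which follows from the standing assumption that the evaluation points are pairwise distinct and $F$ is a field.
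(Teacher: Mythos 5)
Your proposal is correct and follows essentially the same route as the paper: both proofs express $f$ in the Lagrange basis for the $k$ chosen evaluation points and evaluate at $x_0$ to exhibit the secret as $\sum_j L_j(x_0)\,y_{i_j}$. The paper additionally observes that each coefficient $L_j(x_0)$ is nonzero (used later for the malleability discussion), but that is not needed for the linearity statement itself, so your argument is complete as given.
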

\begin{proof}
    A Shamir scheme is based on a polynomial $f$ of degree (at most) $k-1$ over $F$, where $1\le{k}\le{n}<|F|$. In such a scheme, $n+1$ distinct $x$-coordinate values are chosen, with one ($x_{0}$) associated with the secret and the rest ($x_{1},\dots,x_{n}$) each associated with a share.  The secret sharing scheme is defined by the polynomial in $x$ as%
    \begin{aeq}
        f(x)=\sum_{j=0}^{k-1}c_{j}x^{j},
    \end{aeq}%
    with secret coefficients $c_j$.  The $c_j$ and $x_i$ determine the secret $y_0$ and shares $y_{1},\dots,y_{n}$ through $y_i=f(x_i)$.  Any $k$ of the $n+1$ pairs $(x_i, y_i)$ uniquely determine the $c_j$.  The $y_i$ for $k$ of the shares are required to be independent and uniform in $F$.
    For any $J\subseteq\{1,\dots,n\}$ with $|J|=k$, the polynomial $f$ can be expressed as a linear combination of a Lagrange basis of $k$ polynomials $L_j$ of degree $k-1$, such that $L_i(x_j)=\delta_{i,j}$ for $i,j\in{}J$:%
    \begin{aeq}
        f(x)=\sum_{i\in J}y_{i}L_{i}(x).
    \end{aeq}%
    Given known $x_i$ and a set $J$ as defined above, the basis polynomial $L_i$ corresponding to $x_i$ can be determined.  Since each polynomial $L_i$ is of degree $k-1$ and has $k-1$ distinct zeros, $L_i(x_j)\ne0$ whenever $j\not\in{}J$, and in particular, $L_i(x_0)\ne0$.  From this, the secret is a known linear combination of any given $k$ of the $n$ shares:%
    \begin{aeq}
        y_{0}=f(x_{0})=\sum_{i\in J}y_{i}L_i(x_{0}).
    \end{aeq}%
    The confidentiality of $y_0$ provided by the Shamir scheme relies on at least $n-k+1$ of the $n$ shares remaining secret.  Further, if a value $y_i+y'_i$ is substituted for each share $y_i$, the reconstructed secret is%
    \begin{aeq}
        \sum_{i\in J}(y_{i}+y'_{i})L_i(x_{0})=y_{0}+\sum_{i\in J}y'_{i}L_{i}(x_{0}).
    \end{aeq}%
    From this it may be seen that, because the $L_i(x_0)$ are known and nonzero, a single error value $y'_i=y'_0/L_i(x_0)$ added to $y_i$ replaces the reconstructed secret with $y_0+y'_0$.  With multiple errors $y'_i$, $y'_0=\sum_{i\in J}y'_iL_i(x_0)$ can be chosen by an adversary who is able to choose $y'_i$ for a nonempty subset of $J$.
\end{proof}

\subsection{Confidentiality of Shamir sharing scheme}
The fundamental confidentiality of the Shamir secret sharing scheme is expressed in \cref{thm:shamir_confidentiality} \cite{shamir1979share}.
\begin{theorem}\label{thm:shamir_confidentiality}
    In a Shamir secret sharing scheme with threshold $k$, the shared secret is independent of any subset of the shares of size $k-1$ or less.
\end{theorem}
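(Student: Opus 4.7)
The idea is to reduce the claim to bijectivity of a Vandermonde evaluation map, after which independence drops out from the fact that a uniform distribution on a Cartesian power factorizes into independent uniform marginals. Fix any $J \subseteq \{1,\dots,n\}$ with $|J|=k-1$ and write $J=\{i_{1},\dots,i_{k-1}\}$. Consider the linear map $\phi\colon F^{k}\to F^{k}$ sending $(c_{0},\dots,c_{k-1})$ to $(f(x_{0}),f(x_{i_{1}}),\dots,f(x_{i_{k-1}}))$; its matrix is the $k\times k$ Vandermonde matrix at the distinct nodes $x_{0},x_{i_{1}},\dots,x_{i_{k-1}}$, hence invertible.

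First I would fix the probability model using the convention already invoked in the proof of \cref{thm:shamir_linearity}, namely that $k$ evaluations of $f$ are mutually independent and uniform on $F$, which (via $\phi^{-1}$) is equivalent to saying the coefficient vector $(c_{0},\dots,c_{k-1})$ is uniform on $F^{k}$. Applying $\phi$ then pushes the uniform distribution on $F^{k}$ to the uniform distribution on $F^{k}$, so the joint distribution of $(Y_{0},Y_{i_{1}},\dots,Y_{i_{k-1}})$ is uniform on $F^{k}$. Each marginal is therefore $1/|F|$ and the joint equals the product of marginals with value $1/|F|^{k}$, so by \cref{eq:prob_independent} the coordinates are mutually independent; in particular $Y_{0}$ is independent of $(Y_{i_{1}},\dots,Y_{i_{k-1}})$.

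To cover strictly smaller subsets $J'\subseteq J$, I would simply marginalize: any subvector of a uniform random vector on $F^{k}$ is itself uniform on the corresponding lower-dimensional product, and independence from $Y_{0}$ is preserved under taking subvectors. This uniformly handles all $|J'|\le k-1$, and the theorem follows.

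The main obstacle I foresee is only mild, namely correctly pinning down the probability space so that ``the secret is independent of $k-1$ shares'' is even a well-formed statement; the traditional textbook phrasing treats the secret as a fixed value, which is incompatible with independence in the strict sense. Adopting the convention of \cref{thm:shamir_linearity} that the $k$ defining evaluations are independent uniform resolves this cleanly, and the remainder of the argument reduces to the standard observation that an invertible linear change of coordinates preserves the uniform distribution on $F^{k}$.
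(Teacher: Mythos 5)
Your proof is correct, but it takes a genuinely different route from the paper's. The paper first proves \cref{thm:shamir_linearity} (the secret is a linear combination $Y_0=\sum_{j\in J}d_jY_j$ with all $d_j\ne0$), then isolates one share, sets $D=d_iY_i$ and $X=\sum_{j\in J\setminus\{i\}}d_jY_j$, and invokes \cref{lem:indepsum} -- a one-time-pad style lemma saying $X+D$ is independent of $X$ when $D$ is uniform and independent of $X$. You instead argue globally: the evaluation map from the coefficient vector to $(f(x_0),f(x_{i_1}),\dots,f(x_{i_{k-1}}))$ is an invertible Vandermonde map at $k$ distinct nodes, so uniformity of the coefficients pushes forward to joint uniformity of $(Y_0,Y_{i_1},\dots,Y_{i_{k-1}})$ on $F^k$, from which mutual independence is immediate by factorization, and smaller subsets follow by marginalization. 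Your version actually delivers a strictly stronger conclusion (joint uniformity and mutual independence of the secret together with the $k-1$ shares) and sidesteps a small imprecision in the paper: \cref{lem:indepsum} as applied there literally gives independence of $Y_0$ from the \emph{scalar} $X=\sum_{j\in J\setminus\{i\}}d_jY_j$, whereas the theorem asserts independence from the \emph{vector} of shares, which requires an extra conditioning step (condition on the vector $(Y_j)_{j\in J\setminus\{i\}}$ and note $d_iY_i$ remains uniform) that the paper leaves implicit. The paper's route, in exchange, reuses machinery it needs anyway (\cref{thm:shamir_linearity} also underpins the malleability discussion and \cref{thm:shamir_correctness}) and isolates the minimal hypothesis that a single remaining share be uniform and independent. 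One presentational point: you should state explicitly which $k$ evaluations the protocol takes as i.i.d.\ uniform (the paper fixes shares $1,\dots,k$), since your equivalence ``$k$ evaluations uniform $\Leftrightarrow$ coefficients uniform'' is the bridge on which everything rests; as written this is a one-line appeal to the same Vandermonde invertibility, so the gap is cosmetic rather than substantive.
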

\begin{proof}
    Let $J\subseteq\{1,\dots,n\}$, with $|J|=k$, and $i\in J$.
    From \cref{thm:shamir_linearity}, the secret $Y_0$ is a linear sum of $k$ shares:%
    \begin{aeq}
        Y_0=\sum_{j\in J}d_{j}Y_{j},
    \end{aeq}%
    where each coefficient $d_{j}=L_{j}(x_{0})$ is nonzero for all $j\in J$. In a Shamir sharing scheme each subset of $k-1$ shares is mutually independent and uniform. The secret $Y_{0}$ can therefore be partitioned into a sum for any $i\in{}J$:%
    \begin{aeq}
        Y_{0}=d_{i}Y_{i}+\sum_{j\in J\setminus\{i\}}d_{j}Y_{j}.
    \end{aeq}%
    Define $D=d_{i}Y_{i}$ and $X=\sum_{j\in J\setminus\{i\}}d_{j}Y_{j}$. Since $d_{i}$ is a nonzero field element and $Y_{i}$ is uniform and independent of $X$, $D$ and $X$ are mutually independent and uniform.  By \Cref{lem:indepsum}, $Y_0=D+X$ and $X$ are mutually independent, and thus $Y_0$ is independent of any subset that excludes share $i$, for any $i\in J$.

    Note that any subset with size less than $k-1$ is included in the sum $\sum_{j\in J\setminus\{i\}}d_{j}Y_{j}$.  The independence and uniformity of the term $d_{i}Y_{i}$ remains sufficient.
\end{proof}

\subsection{Polynomial hash function}

In this subsection, we use the following notations:
\begin{itemize}[leftmargin=0.5cm]
    \setlength\itemsep{0em}
    \item $y_{(j)}$ denotes the secret ($y_{0}$) from the $j$th of $3+m$ secret sharing schemes being run in parallel to build a $(3+m)$-element secret $Y^A_{0}$, which $j\in\{-2,-1,0,\dots,m\}$ indexes into. Set $c:=y_{(-2)}$, $d:=y_{(-1)}$ and $e:=y_{(0)}$.
    \item The differences $c'$, $d'$, $e'$ and $y'_{(j)}$ each correspond to $c$, $d$, $e$ and $y_{(j)}$ as the difference $y'_0$ for the $y_0$ in the proof of \cref{thm:shamir_linearity}.
\end{itemize}
Consider the family of functions defined by%
\begin{aeq}\label{eq:hash_family_2}
    \mbf{H}'=\{h'_{c,d,e}\colon& F^m\to F\colon\\
    &(y_{(1)},\dots,y_{(m)})\mapsto{}d+ce+\sum_{j=1}^{m}c^{j+1}y_{(j)}\},
\end{aeq}%
where $c$, $d$ and $e$ are elements of a finite field $F$.
Note that this is essentially the same family of functions as used for message hashing: the first element in the list has been shown as a subscript due to its role as a function-selection parameter, and a prime has been added to distinguish it. 

\begin{theorem}\label{thm:shamir_correctness}
    Denote $\mbf{y}=(y_{(1)},\dots,y_{(m)})$ and $\mbf{y}'=(y'_{(1)},\dots,y'_{(m)})$.
    Let $\Omega=F^3$ be a sample space with uniform probability.
    Let $h'_{C,D,E}(\mbf{y})=D+CE+\sum_{j=1}^{m}C^{j+1}y_{(j)}$ define a family of hash functions with random variables $(C,D,E)\in\Omega$ as selection parameters. Let $m\ne0.$
    Then, $\max_{t',c',d',e',\mbf{y}'\ne0} \Pr(t+t'=h'_{C+c',D+d',E+e'}(\mbf{y}+\mbf{y}')~|~t=h'_{C,D,E}(\mbf{y}))\le\min(\frac{m+1}{|F|},1)$.
\end{theorem}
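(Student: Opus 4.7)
The plan is to reduce the conditional probability statement to counting roots of a polynomial in a uniform variable. The conditioning event $t=h'_{C,D,E}(\mbf{y})$ expresses $D$ as a deterministic function of $(C,E)$ (with $\mbf{y}$ and $t$ fixed): namely $D=t-CE-\sum_{j=1}^{m}C^{j+1}y_{(j)}$. Since $(C,D,E)$ is uniform on $F^{3}$ a priori, and for every $(C,E)\in F^{2}$ exactly one value of $D$ satisfies the constraint, the conditional joint distribution of $(C,E)$ remains uniform on $F^{2}$. This step reduces the theorem to estimating a probability over the uniform pair $(C,E)$.

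Next I subtract the two hash equations to eliminate $D$. Expanding the bilinear term $(C+c')(E+e')$ and each binomial $(C+c')^{j+1}$ and then cancelling the occurrences of $D$ and $CE$, the event $t+t'=h'_{C+c',D+d',E+e'}(\mbf{y}+\mbf{y}')$ rearranges into an equation of the form $c'E=R(C)$, where $R(C)$ is a univariate polynomial in $C$ whose coefficients are built from $t', c', d', e', \mbf{y}$ and $\mbf{y}'$. The key structural point is that the only appearance of $E$ in the difference is through $c'E$; the cross-term $Ce'$ involves $C$ only. With this normal form I split cases on $c'$. If $c'\ne 0$, then the equation uniquely determines $E$ from $C$, so exactly $|F|$ of the $|F|^{2}$ points of $F^{2}$ solve it, giving probability $1/|F|$. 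If $c'=0$, the equation collapses to $R(C)=0$ with $R(C)=t'-d'-e'C-\sum_{j=1}^{m}y'_{(j)}C^{j+1}$, a polynomial in $C$ of degree at most $m+1$.

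To bound the $c'=0$ case I invoke the hypothesis $\mbf{y}'\ne 0$: some coefficient $y'_{(j)}$ is nonzero, forcing $R$ to be a nonzero polynomial of degree at most $m+1$, hence having at most $m+1$ roots in $F$. Since $C$ is uniform on $F$, the event $R(C)=0$ has probability at most $(m+1)/|F|$. Combining the two cases and noting any probability is bounded by $1$ yields $\min\bigl(\tfrac{m+1}{|F|},1\bigr)$, which is the claimed bound uniformly over $t',c',d',e'$ and $\mbf{y}'\ne 0$.

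The step I expect to require the most care is verifying in the $c'=0$ subcase that $R(C)$ is genuinely nonzero. This relies on the monomials $y'_{(j)}C^{j+1}$ living at degrees $j+1\ge 2$, so a nonzero $y'_{(j)}$ cannot be cancelled by the lower-degree contributions $t'-d'$ (constant) and $-e'C$ (linear); thus nontriviality of the error in $\mbf{y}$ transfers directly to nontriviality of $R$. The algebraic shape of the hash, in particular using $C^{j+1}$ rather than $C^{j}$ for the message coordinates and including the $CE$ cross-term, is exactly what isolates $E$ in a single linear term while preserving enough degree freedom in $C$ to make this case split work, and the same shape is what yields the one-time-pad masking by $D$ underlying \cref{thm:hash_function_confidentiality}.
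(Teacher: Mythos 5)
Your proof is correct and follows essentially the same route as the paper's: condition on the tag equation so that $(C,E)$ remains uniform on $F^2$, subtract the two hash equations, and split on whether $c'=0$, counting roots of a nonzero polynomial of degree at most $m+1$ in $C$ when $c'=0$. Your treatment of the $c'\ne0$ case is in fact slightly cleaner than the paper's: by observing that $E$ enters the difference only through the single linear term $c'E$, you obtain probability exactly $1/|F|$ in one step, whereas the paper sub-splits on whether the residual polynomial in $C$ is constant and settles for the weaker (but still sufficient) bound $\min(\frac{m+1}{|F|},1)$.
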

\begin{proof}
    Given a Shamir secret sharing scheme used to transmit all of $C$, $D$, $E$ and $\mbf{y}$, an adversary who controls from $1$ to $k-1$ shares can modify $C$, $D$ and $E$ simultaneously by adding a chosen constant to each (as per \cref{thm:shamir_linearity}), with $t$ and $\mbf{y}$ assumed known and modifiable.  Thus, $t$, $C$, $D$, $E$ and $\mbf{y}$ are replaced with $t+t'$, $C+c'$, $D+d'$, $E+e'$ and $\mbf{y}+\mbf{y}'$ respectively.

    We are given that $t=h'_{C,D,E}(\mbf{y})$, which may be written%
    \begin{aeq}\label{eq:hash_shamir_given}
        t=D+CE+\sum_{j=1}^{m}C^{j+1}y_{(j)}.
    \end{aeq}%
    This determines a unique value $D$ for every pair of values $C,E$, and since the \textit{a priori} probability on $\Omega$ is uniform, the \textit{a posteriori} marginal distribution over pairs $(C,E)$ remains uniform, but the marginal distribution on $D$ is potentially nonuniform by a similar argument as in the proof of \cref{thm:message_correctness}.
    To obtain the probability that $t+t'=h'_{C+c',D+d',E+e'}(\mbf{y}+\mbf{y}')$ holds, we subtract the given \cref{eq:hash_shamir_given} from it to obtain the equivalent equation%
    \begin{aeq}\label{eq:hash_shamir_attack}
        t'=~&d'+c'E+Ce'+c'e'\\
        &+\sum_{j=1}^{m}[(C+c')^{j+1}(y_{(j)}+y'_{(j)})-C^{j+1}y_{(j)}].
    \end{aeq}%
    When $c'=0$ and $\mbf{y}'\ne0$, \cref{eq:hash_shamir_attack} reduces to%
    \begin{aeq}\label{eq:hash_shamir_reduced}
        t'=d'+Ce'+\sum_{j=1}^{m}C^{j+1}y'_{(j)},
    \end{aeq}%
    which is a non-constant polynomial in $C$ since $\mbf{y}'\ne0$ implies that for at least one value of $j$, $y'_{(j)}\ne0$, but is independent of $E$.  By the uniformity of $C$, each distinct root for the polynomial in $C$ has probability $\frac{1}{|F|}$. The number of distinct roots for a polynomial of degree $m+1$ is at most $\min(m+1,|F|)$, giving a maximum probability for \cref{eq:hash_shamir_reduced} holding of $\min(\frac{m+1}{|F|},1)$.

    When $c'\ne0$, \cref{eq:hash_shamir_attack} may reduce either to a non-constant or to a constant polynomial in $C$, but either way it retains the term that depends on $E$.  For the former (where there is a dependency on $C$), for each value of $E$ there may be up to $\min(m+1,|F|)$ values of $C$ that solve the polynomial, as for the case where $c'=0$, again giving a probability of holding of $\min(\frac{m+1}{|F|},1)$.  For the latter (where there is no dependency on $C$), \cref{eq:hash_shamir_attack} reduces to%
    \begin{aeq}\label{eq:hash_shamir_lincase}
        t'=~d'+c'E+c'e',
    \end{aeq}%
    which, by the uniformity of $E$ and that $c'\ne0$, has probability $\frac{1}{|F|}$ of holding.

    Given that $m$ cannot be negative, $\min(\frac{m+1}{|F|},1)\ge\frac{1}{|F|}$.  Thus, considering all the cases above, the maximum probability of \cref{eq:hash_shamir_attack} holding under the condition $\mbf{y}'\ne0$ is upper-bounded by $\min(\frac{m+1}{|F|},1)$.

    We exclude the case $m=0$, since the $\max$ operator over an empty domain is undefined.

    Thus, we have that for $m\ne0$,%
    \begin{aeq}
        \max_{t',c',d',e',\mbf{y}'\ne0} \Pr(t+t'=h'_{C+c',D+d',E+e'}(\mbf{y}+\mbf{y}')) \\
        \le\min\big(\frac{m+1}{|F|},1\big).
    \end{aeq}%
\end{proof}

\FloatBarrier

\bibliography{references.bib}
\bibliographystyle{IEEEtran}

\end{document}